\documentclass[aps,10pt,pre,nofootinbib]{revtex4-2}
\makeatletter
\let\auto@bib@innerbib\@empty
\let\write@bibliographystyle\relax
\makeatother
\usepackage{amsmath,amssymb,amsthm}
\usepackage{graphicx}

\usepackage{xcolor}
\newtheorem{theorem}{Theorem}
\newtheorem{proposition}{Proposition}
\newtheorem{lemma}{Lemma}
\newtheorem{corollary}{Corollary}
\newtheorem{remark}{Remark}
\theoremstyle{definition}
\newtheorem{assumption}{Assumption}
\newcommand{\eps}{\varepsilon}
\newcommand{\tho}{\theta_0}
\newcommand{\rhoo}{\rho_0}
\newcommand{\Th}{\Theta}
\newcommand{\Si}{\Sigma}
\newcommand{\bmuo}{\bar{\mu}_0}
\newcommand{\Mst}{M_0^{\ast}}

\newcommand{\calN}{\mathcal{N}}
\newcommand{\calE}{\mathcal{E}}

\newcommand{\dMach}{d_M}
\newcommand{\dDef}{d_F}
\begin{document}
\title{Nonisothermal Shock Structure and Universal Flow-Index Thresholds \\ in a Hyperbolic Power-Law Fluid}
\author{Tommaso Ruggeri}
\email{tommaso.ruggeri@unibo.it}
\affiliation{Department of Mathematics and AM$^2$C Research Center, University of Bologna, Via Saragozza 8, 40123 Bologna, Italy}
\affiliation{Accademia Nazionale dei Lincei, Via della Lungara 10, 00165 Roma, Italy}
\date{\today}
\begin{abstract}
We investigate whether the two flow-index thresholds previously found for isothermal shock profiles persist when the full nonisothermal dynamics is taken into account in a hyperbolic power-law relaxation model of Rational Extended Thermodynamics. The nonisothermal profile problem is structurally different from its isothermal counterpart: restoring the energy balance determines the temperature along the traveling wave and feeds it back into the pressure, the relaxation production, the temperature-dependent consistency coefficient, and the characteristic structure. Under thermodynamic stability, $p_\theta\ge0$, and strict convexity of the reduced Hugoniot pressure, no nontrivial constant-temperature compressive profile can satisfy the full equations. We derive an exact global characteristic-ordering identity and prove that the positive nonequilibrium characteristic speed has its strict global minimum at the unperturbed upstream state. Consequently, a monotone continuous profile exists for $1<M_0<\Mst$, whereas for $M_0>\Mst$ the Boillat--Ruggeri theorem excludes a $C^1$ profile and any admissible piecewise-smooth connection must contain a subshock. Despite the thermomechanical coupling, the shock-thickness classification remains unchanged: $m=2$ is the weak-shock threshold and $m=1$ the near-critical threshold as $M_0\nearrow\Mst$. The corresponding exponents are constitutive-independent within the present class, while finite limiting values and prefactors depend on the equation of state, internal energy, and temperature-dependent consistency coefficient. For the Tait--Murnaghan example, increasing the reference temperature lowers the critical Mach number when the dimensional viscous--relaxation scale is fixed and, for the thermally thinning law considered, reduces the resolved shock thickness.
\end{abstract}
\keywords{shock waves; non-Newtonian fluids; power-law fluids; Rational Extended Thermodynamics; hyperbolic relaxation; traveling waves; subshock formation; critical scaling}
\maketitle 
\section{Introduction}
\subsection{Background and motivation: Shock waves in soft matter}
Complex fluids and soft matter systems---encompassing polymer solutions, biological gels, and dense colloidal suspensions---exhibit highly nonlinear, rate-dependent responses that defy classical Newtonian hydrodynamics.
	Their defining macroscopic features, such as shear thinning and shear thickening, emerge from complex mesoscopic structural rearrangements under applied stress \cite{Bird1987, ChhabraRichardson2008, BrownJaeger2014}.
	While the steady-state rheology of these materials is extensively documented, their behavior under extreme dynamical conditions, such as the propagation of shock waves and high-rate pressure pulses, remains a profound challenge in nonlinear physics.
	Understanding shock structures in complex fluids is crucial for phenomena ranging from ballistic impact mitigation in biological tissues to the propagation of transient fronts in dense suspensions \cite{Waitukaitis2012, Peters2016}. These phenomena motivate the study, but the present macroscopic RET closure does not attempt to model microstructural jamming mechanisms explicitly.
	In these high-strain-rate regimes, a widely used constitutive relation is the power-law (or Ostwald--de Waele) rheology:
\begin{equation}\label{powerlawintro}
		\mathbf{s} = k(\theta)\,\dot\gamma^{m-1}\,\mathbf{D},
		\qquad \dot\gamma = 2\|\mathbf{D}\|,
	\end{equation}
    where $\mathbf{s}$ denotes the viscous, nonequilibrium part of the Cauchy stress tensor,
	 $m>0$ is the flow-behavior index, $k(\theta)>0$ is the temperature-dependent consistency coefficient, and $\mathbf{D}$ is the strain-rate tensor. 
     In the Lagrangian formulation used below, the corresponding viscous first Piola--Kirchhoff stress is denoted by $\boldsymbol{\sigma}$ and is related to the Cauchy viscous stress by the standard Piola transformation
$\boldsymbol{\sigma}=\mathbf{s}\,\mathbf{F}^{C}$, where $\mathbf{F}^{C}$ is the cofactor of the deformation gradient. In the one-dimensional reduction this becomes the scalar variable $\sigma$.
	The temperature sensitivity of $k(\theta)$ can be essential: viscous dissipation within a shock layer may generate appreciable heating and thereby alter the local mobility of the complex fluid.
	This is not only a modeling refinement, but a central issue in the rheology of complex and polymeric fluids.
	Temperature changes affect relaxation times and material functions, as expressed for instance by the Williams--Landel--Ferry time--temperature superposition principle \cite{WLF1955} and by the classical theory of polymeric liquids \cite{Bird1987}.
	The development of constitutive equations for nonisothermal polymeric flows, including thermal histories, viscous heating, and the coupling between the energy equation and the stress response, was emphasized in the review by Bird and Wiest \cite{BirdWiest1995}.
	Non-isothermal channel flows of non-Newtonian fluids with viscous heating were studied by Dinh and Armstrong \cite{DinhArmstrong1982}.
	Thermo-rheological models in which the viscosity, consistency coefficient, or even the power-law exponent depend on temperature have also been analyzed in the mathematical literature \cite{AntontsevRodrigues2006,GrasselliParoliniPoiattiVerani2023}.
	Related effects of viscous heating in highly viscous fluids with temperature-dependent viscosity have been discussed, for example, in magma-flow models \cite{CostaMacedonio2003}.
	These works show that temperature is not a secondary correction in non-Newtonian fluids: it may modify the effective viscosity, the relaxation time scale, and the dissipation mechanism.
	Any realistic physical analysis of shock transitions must therefore account for the full thermomechanical coupling introduced by the energy equation.
	Classical continuum models incorporating nonlinear rheology \eqref{powerlawintro} into Navier--Stokes-type equations suffer from a well-known parabolic paradox: they predict an infinite propagation speed for disturbances, a physically untenable artifact when dealing with fast transients and shock formation.
	A mathematically rigorous and thermodynamically consistent resolution is provided by \emph{Rational Extended Thermodynamics} (RET) \cite{MullerRuggeri1998, RuggeriSugiyama2021}.

	RET was originally developed from the hierarchy of moments in kinetic theory.
	In that setting, the flux associated with a retained moment is itself a
	higher-order moment and, once admitted among the independent variables, is
	supplied with the next balance equation of the hierarchy. A general continuum
	theory need not possess such a literal moment sequence, but RET preserves its
	structural idea: a nonequilibrium flux added to the state space is treated as
	an independent field governed by an additional balance law, whose density,
	flux, and production are local objective constitutive functions, rather than
	being prescribed through a nonlocal constitutive equation. The enlarged
	system is then constrained by the entropy principle and thermodynamic
	stability. In particular, entropy production must be nonnegative, while the
	physical entropy must be concave---equivalently, the mathematical entropy must
	be convex---with respect to the balance densities. The corresponding main
	field therefore provides symmetrizing variables, so that the homogeneous part
	of the system is symmetric hyperbolic
	\cite{MullerRuggeri1998,RuggeriSugiyama2021,RuggeriStrumia1981,Ruggeri2024}.

	Recently, Ruggeri \cite{Ruggeri2025} proposed a \emph{hyperbolic relaxation variant}
	of the classical power-law theory for non-Newtonian fluids. In this RET model the
	viscous stress tensor is promoted to an independent field with a finite relaxation
	time; only in the fast-relaxation limit does one recover the usual parabolic
	power-law constitutive response. The power-law part is parameterized by a flow
	index $m > 0$ and a consistency coefficient $k > 0$. The model reduces to the
	classical Newtonian viscous fluid for $m = 1$, and describes shear-thinning
	($m < 1$) or shear-thickening ($m > 1$) behavior otherwise.
	The present model belongs to a series of studies by Ruggeri and coworkers in which
	the universal principles of RET have been applied to models of nonlinear
	viscoelasticity \cite{Ruggeri2024,ArimaRuggeriTaniguchi2026,Amabili2025,Giusti}.
	Related wave-propagation issues, including acceleration waves and the K-condition, were examined in \cite{RuggeriAccel}.
	The isothermal restriction was eliminated in the recent paper of Arima and Ruggeri \cite{ArimaRuggeri2026}. The resulting nonisothermal equations retain the \emph{symmetric hyperbolic} structure required by RET, ensuring finite propagation speeds and allowing the standard local well-posedness theory for the Cauchy problem \cite{FischerMarsden1972,Kato1975,Majda1984}.
	In a mathematically related framework, but based on a different physical interpretation, symmetric hyperbolic formulations have been proposed by various authors to model elastic, viscous, and relaxation effects; in particular, we mention~\cite{GodunovRomenski1972,PeshkovRomenski2016,DumbserPeshkovRomenski2017}.

\subsection{Aim and main conclusions}
The shock-wave structure of the isothermal model was analyzed in the
companion paper with Taniguchi~\cite{RuggeriTaniguchi}, now published in
\emph{Physics of Fluids}. That study identifies two thresholds for the
flow-behavior index: a weak-shock threshold at $m=2$ and a near-subshock
threshold at $m=1$. In the isothermal theory the temperature is fixed and
the energy balance is omitted; consequently, the pressure law and
consistency coefficient are evaluated at a prescribed temperature and
there is no thermomechanical feedback along the traveling wave.

The passage to the present nonisothermal problem is not obtained by simply
replacing constants in the isothermal profile equation by
temperature-dependent coefficients. The energy equation imposes the
algebraic constraint $\theta=\Theta(F;s)$, changes the pressure and stress
into composite functions along the Hugoniot branch, modifies the relaxation
production, and enters the traveling-wave denominator through the
nonequilibrium characteristic speed. In fact,
Corollary~\ref{cor:no-constant-temperature} proves that, under the present
thermodynamic and convexity assumptions, no nontrivial compressive profile
of the full system can remain at constant temperature. Thus an isothermal
shock profile is not a hidden special solution of the nonisothermal
traveling-wave problem.

The main mathematical novelty is the global characteristic-ordering
identity proved in Lemma~\ref{lem:Dpositive}. For equations of state with
$p_\theta\ge0$ and strictly convex reduced Hugoniot pressure
$P_s(F)=p(F,\Theta(F;s))$, it shows that the positive nonequilibrium
characteristic speed is strictly larger at every interior point of the
compressive profile than at the upstream state. Hence the first
characteristic encounter occurs precisely at $F=1$ when $M_0=\Mst$, and a
continuous monotone profile exists throughout $1<M_0<\Mst$ without imposing
a separate regular-branch hypothesis. For $M_0>\Mst$, the
Boillat--Ruggeri theorem excludes a $C^1$ structure; just above threshold,
the reduced equation also exhibits the corresponding uncancelled
characteristic pole.

With this global issue resolved, the asymptotic analysis proves that the two
isothermal indices survive the full thermomechanical coupling. The
weak-shock exponent remains $1-2/m$, so $m=2$ separates broadening, a finite
plateau, and collapse. Near $\Mst$, the local degeneracy of the coupled
traveling-wave denominator produces the threshold $m=1$ and, for $m>1$,
the exponent $(m-1)/m$. These exponents are constitutive-independent within
the present class, whereas the finite limiting thicknesses and prefactors
retain the full dependence on the equation of state, internal energy,
temperature profile, and temperature-dependent consistency coefficient.
The Tait--Murnaghan example additionally displays a genuinely thermal
effect: when the dimensional viscous--relaxation scale is fixed, increasing
the reference temperature changes the upstream adiabatic sound speed,
lowers the critical Mach number, and, for the thermally thinning law used
here, reduces the resolved shock thickness.

The manuscript is mathematically self-contained; the published companion
paper is used to identify the isothermal benchmark, not to assume convergence
of one family of profiles to the other. Throughout this paper, the word
``universal'' refers only to the two thresholds and their associated critical
exponents within the present class of symmetric-hyperbolic power-law
relaxation closures. Finite limiting thicknesses and scaling amplitudes are
not universal numerical constants.

\subsection{Structure of the paper}
The paper is organized as follows. In Section~\ref{sec:model} we recall
	the nonisothermal RET model of Arima and Ruggeri~\cite{ArimaRuggeri2026}
	and reduce the traveling-wave equations to a scalar ODE by exploiting
	the conservation laws. The constitutive hypotheses (H1)--(H4b) are
	stated precisely in Section~\ref{sec:hyp}. Section~\ref{sec:main}
	contains the main analytical results. We first identify the critical
	Mach number \(\Mst\) and prove, in Theorem~\ref{thm:existence},
	the existence of a monotone continuous traveling-wave profile for
	\(1<M_0<\Mst\); the local subshock state beyond this threshold is
	described in Proposition~\ref{prop:subshock}. We then prove the two
	universality theorems: Theorem~\ref{thm:weak} establishes the
	weak-shock scaling and Theorem~\ref{t:nearcrit} the near-critical
	behavior. Section~\ref{sec:constitutive} discusses
	the physical constitutive functions for non-Newtonian fluids and
	verifies the hypotheses. Section~\ref{sec:numerical} collects the
	numerical setup and illustrates the theoretical results, including
	the role of the reference temperature $\theta_0$ on $\Mst$ and on
	the dimensionless shock thickness $\Delta$. Concluding remarks are given in
	Section~\ref{sec:conclusions}.
	
	\section{The nonisothermal RET model and traveling-wave reduction}
	\label{sec:model}

	\subsection{Nonisothermal RET system}
We consider a one-dimensional non-Newtonian power-law fluid in Lagrangian coordinates $(X,t)$.
If $x=x(X,t)$ denotes the motion, the one-dimensional deformation gradient is
$F=\partial x/\partial X>0$. In multidimensional notation the Jacobian satisfies
$J=\det\mathbf F=\rho_0/\rho$; in the present one-dimensional reduction this gives
$F=J=\rho_0/\rho$, with $\rho_0$ the constant reference mass density. The unknown
fields are the velocity $v(X,t)$, the deformation gradient $F(X,t)>0$, the absolute
temperature $\theta(X,t)>0$, and the scalar nonequilibrium stress $\sigma(X,t)$,
while $p(F,\theta)$ and $\eps(F,\theta)$ denote the pressure and the specific
internal energy. The variable $\sigma$ is the one-dimensional viscous Piola stress
already introduced in the Introduction; in one space dimension the Piola and Cauchy
viscous stresses coincide, since the cofactor $F^C$ is equal to $1$ by convention.
With this notation, the nonisothermal RET system reads~\cite{ArimaRuggeri2026}:
	\begin{subequations}\label{system}
		\begin{align}
			&\rhoo\,v_t + \bigl[p(F,\theta)-\sigma\bigr]_X = 0,\label{system:a}\\
			&F_t - v_X = 0,\label{system:b}\\
			&\rhoo\!\left(\dfrac{v^2}{2}+\eps(F,\theta)\right)_t
			+\bigl[(p(F,\theta)-\sigma)\,v\bigr]_X = 0,\label{system:c}\\
			&\bigl[Z(\bar\sigma)\bigr]_t - v_X
			= P(F,\sigma,\theta),\label{system:d}
		\end{align}
	\end{subequations}
	where
	\begin{equation*}
		P(F,\sigma,\theta)
		= -F\,2^{1/m-1}\,k(\theta)^{-1/m}\,|\sigma|^{(1-m)/m}\,\sigma,
	\end{equation*}
	and $\bar\sigma=\sigma/\theta$. The function $Z$ is chosen with $Z(0)=0$ and
	\begin{equation}\label{Zomega}
		\frac{dZ}{d\bar\sigma}(\bar\sigma)=\omega(\bar\sigma)>0,
	\end{equation}
	so that \eqref{system:d} is equivalently
\begin{equation}\label{2dd}
\omega(\bar\sigma)\bar\sigma_t-v_X=P(F,\sigma,\theta).
    \end{equation}
 The coefficient
	$\omega(\bar\sigma)>0$ is not an arbitrary material multiplier: it is related to the
	nonequilibrium part of the entropy. More precisely, we denote by
	$\eta(F,\theta,\bar\sigma)$ the full extended specific entropy and by
	$\eta^{\rm eq}(F,\theta)$ its equilibrium restriction. They are related by
	\begin{equation}\label{extendedentropy}
		\eta(F,\theta,\bar\sigma)
		=\eta^{\rm eq}(F,\theta)
		-\frac{1}{\rhoo}\int_0^{\bar\sigma}\omega(z)z\,dz,
	\end{equation}
	where the equilibrium entropy satisfies the Gibbs relation
	\begin{equation}\label{gibbsrelation}
		\theta\,d\eta^{\rm eq}=d\eps+\frac{p}{\rhoo}\,dF.
	\end{equation}
	Equivalently, the equilibrium part of the entropy principle may be written
	in terms of an equilibrium Helmholtz free energy
	\begin{equation}\label{freeenergydef}
		\psi(F,\theta)=\eps(F,\theta)-\theta\eta^{\rm eq}(F,\theta).
	\end{equation}
	For the fluid specialization considered here, it is convenient to write the
	thermodynamic closure directly in terms of the pressure. The equilibrium
	Helmholtz free energy satisfies
	\begin{equation}\label{freeenergyclosure}
		\eps=\psi-\theta\psi_\theta,
		\qquad
		p=-\rhoo\psi_F.
	\end{equation}
	Differentiating these relations gives the useful identities
	\begin{equation}\label{freeenergyidentities}
		\eps_F=\frac{\theta p_\theta-p}{\rhoo},
		\qquad
		\eps_\theta=-\theta\psi_{\theta\theta}.
	\end{equation}
	Thus the thermodynamic convexity conditions inherited from the entropy
	principle become, in the pressure notation,
	\begin{equation}\label{thermoconvpressure}
		p_F<0,
		\qquad
		\eps_\theta>0.
	\end{equation}
	Then the balance laws~\eqref{system:a}--\eqref{2dd} imply the
	entropy law
	\begin{equation}\label{entropylaw}
		\rhoo \eta_t
		= -\frac{\sigma}{\theta}\,P(F,\sigma,\theta)\ge0.
	\end{equation}
	Here $\eta$ denotes the physical specific entropy, so the sign in
	\eqref{entropylaw} is the physical entropy-production sign; the convex
	mathematical entropy used to symmetrize the system is $-\eta$.  The entropy
	flux is absent because no heat flux is included. For the
	power-law production term above one has explicitly
	\begin{equation*}
		-\sigma P(F,\sigma,\theta)
		=F\,2^{1/m-1}\,k(\theta)^{-1/m}|\sigma|^{(m+1)/m}\ge0,
	\end{equation*}
	so that the second law is satisfied. Thus, for the chosen fields and in
	the absence of heat flux, the relaxation equation~\eqref{2dd} is the
	RET balance law compatible with the entropy principle; in particular,
	$\bar\sigma=\sigma/\theta$ is the natural thermodynamic nonequilibrium
	variable.

	Under the standard thermodynamical convexity conditions the physical entropy is
	concave; equivalently, the mathematical entropy $-\eta$ is convex. With the main field 
    \begin{equation*}
		\mathbf{u}' \equiv \frac{1}{\theta}(v,-p,-1,\sigma),
	\end{equation*}
    the system becomes symmetric hyperbolic according to the general theory \cite{RuggeriStrumia1981}.

	In the fast-relaxation limit $\omega\to 0$, equation~\eqref{2dd} reduces to
	the inverse of the one-dimensional power-law relation associated with the
	Cauchy viscous stress in~\eqref{powerlawintro}. In the present paper we
	consider the particular case of a quadratic nonequilibrium entropy, for
	which $\omega$ is constant \cite{ArimaRuggeri2026}:
	\begin{equation}\label{omegaConstant}
		\omega = \frac{\tau_0 \theta_0}{\mu_0},
	\end{equation}
	and, with the normalization $Z(0)=0$, equivalently
	$Z(\bar\sigma)=\omega\bar\sigma$. Here $\tau_0, \theta_0, \mu_0$
	are positive constants having dimensions
	respectively of a time, a reference temperature, and a viscosity. Since
	$[\mu_0]=[\text{stress}][\text{time}]$ and
	$[\bar\sigma]=[\text{stress}]/[\text{temperature}]$, one has
	$[\omega]=[\text{temperature}]/[\text{stress}]$, as required by
	$Z'(\bar\sigma)=\omega$.

	In the present model, the macroscopic heat flux is deliberately neglected.
	The purpose is to isolate the thermo-viscous coupling generated by the
	energy balance and viscous dissipation within the minimal four-field RET
	system. On the short mechanical time scales associated with shock
	propagation, thermal diffusion may be subdominant to this local heating.
	This approximation can be quantified dimensionally. If
	$\chi=\kappa/(\rho_0 C_v)$ is the thermal diffusivity and
	$\ell_0=\tau_0c_0$ is the natural mechanical relaxation length introduced
	below, the ratio of Fourier diffusion to transport across a layer of width
	$\ell_0$ is
	\begin{equation}\label{thermalPeclet}
		\varepsilon_q:=\frac{\chi}{c_0\ell_0}
		=\frac{\chi}{\tau_0c_0^2}=\mathrm{Pe}_0^{-1}.
	\end{equation}
	Thus neglect of Fourier conduction is asymptotically consistent in the
	mechanically dominated regime $\mathrm{Pe}_0\gg1$, equivalently when the
	advective--diffusive length $\chi/c_0$ is small compared with $\ell_0$
	(or when $\sqrt{\chi\tau_0}\ll\ell_0$).  Because $\kappa$ is not a
	coefficient of the reduced four-field model, the normalized computations
	below should be interpreted within this regime rather than as a claim of
	validity for $\mathrm{Pe}_0=O(1)$. No single numerical value of
	$\mathrm{Pe}_0$ is assigned here, because a material-specific estimate
	would require both the thermal diffusivity and the RET relaxation time
	$\tau_0$ measured for the same material and upstream state. Thus
	\eqref{thermalPeclet} is an applicability criterion to be checked when the
	model is calibrated to a particular fluid.
	Moreover, adding a classical Fourier law would destroy finite propagation
	speed and hence the symmetric-hyperbolic character of the model. A fully
	hyperbolic treatment of heat transport is possible in RET by introducing
	the heat flux as an additional independent nonequilibrium field, but this
	would add further characteristic families. Whether those additional
	characteristics preserve the two critical thresholds found here is a
	separate problem and is not assumed in the present analysis.

	\subsection{Traveling-wave reduction}
	We seek right-going traveling waves depending on the single variable
	\begin{equation}\label{travelingcoordinate}
		\varphi=X-st,
		s>0.
	\end{equation}
	The equilibrium constant boundary data are imposed as
	\begin{equation}\label{TWboundary}
		(F,v,\theta,\sigma)(\varphi)\to
	\begin{cases}
		(F_-,v_-,\theta_-,0), & \varphi\to-\infty,\\[0.25em]
		(1,0,\theta_0,0), & \varphi\to+\infty.
	\end{cases}
	\end{equation}
	The state at $\varphi\to+\infty$ is the unperturbed upstream state. The case
	$s<0$ is obtained by reversing the propagation direction and is not considered
	separately. Integration of the conservation laws \eqref{system:a}--\eqref{system:c},
	normalized by the upstream state in \eqref{TWboundary}, yields
	\begin{align}
		& v = s(1-F), \label{mass}\\
		&\sigma = \Si(F,\theta)
		\equiv p(F,\theta)-p_0-\rhoo s^2(1-F),
		\label{momentum}\\
		&\calE(F,\theta) = 0, \label{energy}
	\end{align}
	where $p_0=p(1,\tho)$ and $\calE(F,\theta)=0$ is the integrated energy
	equation (an implicit relation between $F$ and $\theta$ derived below).
	Evaluating \eqref{mass}--\eqref{energy} at the downstream equilibrium
	state, where $\sigma_-=0$, gives the Rankine--Hugoniot relations for the
	equilibrium subsystem obtained by setting $\sigma=0$ in
	\eqref{system:a}--\eqref{system:c}:
	\begin{align}
		&v_- = s(1-F_-), \label{RHmass}\\
		&p(F_-,\theta_-)-p_0=\rhoo s^2(1-F_-), \label{RHmom}\\
		&\eps(F_-,\theta_-)-\eps_0
		=\frac{p_0}{\rhoo}(1-F_-)+\frac{s^2}{2}(1-F_-)^2.\label{RHenergy}
	\end{align}
	It is useful to parametrize the shock branch by the downstream
	deformation $F_-$, rather than by the shock speed $s$.  
    
    Following Arima and Ruggeri~\cite{ArimaRuggeri2026}, the equilibrium
	adiabatic sound speed is defined by
	\begin{equation}\label{adiabaticSound}
		c_S^2(F,\theta)
		:=\left.\frac{\partial p}{\partial\rho}\right|_{\eta^{\rm eq}}
		=\frac{F^2}{\rhoo}
		\left(
		-p_F(F,\theta)
		+\frac{\theta p_\theta^2(F,\theta)}{\rhoo\eps_\theta(F,\theta)}
		\right).
	\end{equation}
	The positive Lagrangian characteristic velocity of the equilibrium
	subsystem is therefore
	\begin{equation}\label{equilibriumlambda}
		\bar\lambda(F,\theta)=\frac{c_S(F,\theta)}{F}.
	\end{equation}
	At the upstream equilibrium state, where
	$\bar\lambda(1,\theta_0)=c_S(\theta_0)$, we define once and for all
	\begin{equation}\label{MachDef}
		c_S(\theta_0):=c_S(1,\theta_0),
		\qquad
		M_0:=\frac{s}{c_S(\theta_0)}.
	\end{equation}
	The Rankine--Hugoniot relations above are precisely the classical
	Euler shock relations written in Lagrangian variables. In what follows
	we do not regard the orientation of the equilibrium shock as a new result:
	we select the standard entropy-admissible right-going Lax branch of a
	genuinely nonlinear acoustic family; see, for instance,
	Refs.~\cite{CourantFriedrichs1948,Lax1957,MenikoffPlohr1989}. This is the
	ordinary compressive branch. Since $F=\rho_0/\rho$ and the upstream state
	is normalized by $F=1$, compression gives
	\begin{equation}\label{LaxCompressionF}
		F_-<1.
	\end{equation}
	The same classical admissible branch has increased pressure, temperature,
		and entropy across the shock. We use this standard ordering only as a
		branch selection, not as a new result. Thus, for the present normalization,
	\begin{equation}\label{DownstreamOrdering}
		p(F_-,\theta_-)>p_0,
		\qquad
		\theta_->\theta_0,
		\qquad
		v_-=s(1-F_-)>0,
	\end{equation}
	where the last inequality follows directly from \eqref{RHmass} since
	$s>0$ and $F_-<1$. The corresponding Lax inequalities for the positive
	equilibrium family are
	\begin{equation}\label{LaxEquilibriumInequality}
		\bar\lambda(1,\theta_0)<s<\bar\lambda(F_-,\theta_-).
	\end{equation}
	By \eqref{MachDef}, the upstream part of
	\eqref{LaxEquilibriumInequality} is simply $M_0>1$.
	Eliminating $s^2$ between \eqref{RHmom} and \eqref{RHenergy} gives the
	scalar thermodynamic Hugoniot relation
	\begin{equation}\label{RHthermal}
		\eps(F_-,\theta_-)-\eps_0
		=\frac{1-F_-}{2\rhoo}\,
		\bigl(p(F_-,\theta_-)+p_0\bigr).
	\end{equation}
	Anticipating the temperature compatibility assumption (H2) below,
	\eqref{RHthermal} determines the downstream temperature $\theta_-$
	implicitly as a function of the chosen value of $F_-$. Once $\theta_-$
	is fixed, \eqref{RHmom} gives
	\begin{equation}\label{shockSpeedFromFminus}
		s^2=
		\frac{p(F_-,\theta_-)-p_0}{\rhoo(1-F_-)}.
	\end{equation}
	The positivity of \eqref{shockSpeedFromFminus} is therefore consistent
	with the ordering \eqref{DownstreamOrdering}. Thus the intervals naturally
	associated with the traveling wave are $[F_-,1]$ for the deformation and
	$[\theta_0,\theta_-]$ for the temperature.

	\paragraph{Energy integral.}
	Using \eqref{mass} and the momentum integral to eliminate $v$ and
	the flux $(p(F,\theta)-\sigma)v$, the energy conservation law integrates exactly to a purely algebraic relation for the local internal energy $\eps$ along the shock profile:
	\begin{equation}\label{energyint}
		\eps(F, \theta) - \eps_0 = \frac{p_0}{\rhoo}(1 - F) + \frac{s^2}{2}(1 - F)^2.
	\end{equation}
	
	Under Assumption~\ref{ass:state} below,
	equation~\eqref{energyint} defines $\theta$ as an explicit function
	$\theta=\Th(F;s)$ of $F$ alone along the profile.
	Substituting $\theta=\Th(F;s)$ into \eqref{momentum}, we define the
	reduced composite stress and its temperature-normalized counterpart by
	\begin{equation}\label{ReducedSigmaPi}
		\Si(F;s):=\Si\bigl(F,\Th(F;s)\bigr),\qquad
		\Pi(F;s):=\frac{\Si(F;s)}{\Th(F;s)}.
	\end{equation}
	From now on, a prime on a reduced function of $F$, such as $\Theta'(F;s)$,
	$\Pi'(F;s)$, or $\Si'(F;s)$, denotes differentiation with respect to $F$.
	The derivative with respect to the traveling-wave variable $\varphi$ will be
	written as $F_\varphi=dF/d\varphi$.
	The relaxation equation \eqref{2dd} then becomes the scalar autonomous ODE
	\begin{equation}\label{ODE}
		\boxed{F_\varphi \equiv \frac{dF}{d\varphi}= \frac{\calN(F)}{sD(F)}},
		\qquad
		D(F)=1-\frac{\tau_0\theta_0}{\mu_0}\,\Pi'(F),
	\end{equation}
	Since $F$ is dimensionless and $\Pi=\Sigma/\Theta$ has dimensions of
	stress divided by temperature, the product
	$(\tau_0\theta_0/\mu_0)\Pi'(F)$ is dimensionless; hence $D(F)$ is
	dimensionless, consistently with the characteristic identity
	\eqref{Dcharacteristic} below. The numerator is
	\begin{equation}\label{numerator}
		\calN(F) = -F\,2^{1/m-1}\,k\!\bigl(\Th(F)\bigr)^{-1/m}
		\,|\Si(F)|^{(1-m)/m}\,\Si(F).
	\end{equation}
	The scalar traveling-wave problem is completed by the boundary conditions
	\begin{equation}\label{BCscalar}
		F(\varphi)\to F_- \quad (\varphi\to-\infty),
		\qquad
		F(\varphi)\to 1 \quad (\varphi\to+\infty).
	\end{equation}
	The denominator in \eqref{ODE} has a useful characteristic interpretation.
	Differentiating the energy integral \eqref{energyint} and \eqref{freeenergyidentities} gives along the Hugoniot branch
\begin{equation}\label{ThetaPrimeGeneral}	\Theta'(F;s)
		=
		\frac{\Si(F;s)-\Theta(F;s)p_\theta(F,\Theta(F;s))}
		{\rhoo\eps_\theta(F,\Theta(F;s))}.
	\end{equation}
	Consequently,
	\begin{equation}\label{PiPrimeIdentity}
		\Theta\,\Pi'
		=
		p_F+\rhoo s^2
		-
		\frac{\bigl(\Si-\Theta p_\theta\bigr)^2}
		{\rhoo\Theta\eps_\theta},
	\end{equation}
	where all constitutive functions are evaluated at
	\((F,\Theta(F;s))\). Comparing \eqref{PiPrimeIdentity} with the square of the
	positive acoustic characteristic speed of the RET system \cite{ArimaRuggeri2026}
	\begin{equation}\label{lambda0local}
		\lambda_0^2(F;s)
		=
		-\frac{p_F}{\rhoo}
		+
		\frac{\bigl(\Si-\Theta p_\theta\bigr)^2}
		{\rhoo^2\Theta\eps_\theta}
		+
		\frac{\mu_0\Theta}{\rhoo\tau_0\theta_0},
	\end{equation}
	we obtain the exact identity
	\begin{equation}\label{Dcharacteristic}
		D(F;s)
		=1-\frac{\tau_0\theta_0}{\mu_0}\Pi'(F;s)
		=
		\frac{\rhoo\tau_0\theta_0}{\mu_0\Theta(F;s)}
		\left(\lambda_0^2(F;s)-s^2\right).
	\end{equation}
	Thus the denominator can vanish only when the shock speed meets the
	positive acoustic eigenvalue. At this stage this is only a local
	characteristic identity for the reduced traveling-wave equation; its role
	in the critical Mach number and in subshock formation is discussed below,
	where it is actually used.

	\section{Constitutive hypotheses}
	\label{sec:hyp}

	\begin{assumption}[Admissible thermodynamics and temperature range]\label{ass:state}
		The constitutive functions $p(F,\theta)$ and $\eps(F,\theta)$ are
		$C^2$ on a neighborhood of $[F_-,1]\times[\tho,\theta_-]$ and satisfy:
		\begin{enumerate}
			\item[(H1)] \emph{Thermodynamic consistency and stability}: the
				equilibrium Gibbs relation~\eqref{gibbsrelation} holds for
				$\eta^{\rm eq}$; equivalently, there exists an equilibrium Helmholtz free
				energy $\psi(F,\theta)$ satisfying~\eqref{freeenergyclosure}. We work in a
				thermodynamically stable region, in the standard sense of concavity of the
				physical equilibrium entropy (or, equivalently, convexity of the mathematical
				entropy obtained by changing its sign). In the pressure representation this
				requires, on the relevant states,
\begin{equation}\label{thermostabilityH1}
p_F(F,\theta)<0,
\qquad
\eps_\theta(F,\theta)=C_v(F,\theta)>0.
\end{equation}
The adiabatic sound speed $c_S$ defined in \eqref{adiabaticSound} is therefore
positive; we assume it is finite on the relevant compact range of states. We
additionally restrict attention to an ordinary thermal-expansion region,
\begin{equation}\label{ordinaryThermalExpansion}
p_\theta(F,\theta)\ge0.
\end{equation}
This sign condition is not a consequence of thermodynamic stability. At fixed
$F$ the density is fixed, so $p_\theta$ is the usual isochoric thermal-pressure
coefficient. If
\[
K_T:=\rho\left(\frac{\partial p}{\partial\rho}\right)_\theta>0,
\qquad
\alpha:=-\frac{1}{\rho}
\left(\frac{\partial\rho}{\partial\theta}\right)_p,
\]
denote the isothermal bulk modulus and the volumetric thermal-expansion
coefficient, respectively, then the exact identity
\[
\left(\frac{\partial p}{\partial\theta}\right)_\rho=\alpha K_T
\]
shows that \eqref{ordinaryThermalExpansion} is precisely the ordinary
thermal-expansion condition $\alpha\ge0$. It excludes anomalous intervals,
such as water near its density maximum, and negative-thermal-expansion
materials. The upstream sound speed and Mach number are those defined in
\eqref{MachDef}.
				\item[(H2)] \emph{Temperature compatibility and branch regularity}: for
			each selected right-going compressive Euler branch, the scalar Hugoniot relation
			\eqref{RHthermal} determines a locally unique admissible downstream
			temperature $\theta_-$. The ordering $\theta_->\theta_0$ is the one fixed by
			the selected classical branch in \eqref{DownstreamOrdering}. Moreover, on a
			neighborhood of each compact shock family considered below, the algebraic
			energy integral~\eqref{energyint} admits a solution
			$\theta=\Th(F;s)$ that is jointly $C^2$ in $(F,s)$. On each individual
			branch, $\Th(1;s)=\theta_0$, $\Th(F_-;s)=\theta_-$, and
			$\Th(F;s)\in[\theta_0,\theta_-]$ for all $F\in[F_-,1]$.
			
	\end{enumerate}
\end{assumption}

\begin{assumption}[Power-law rheology]\label{ass:k}
\begin{enumerate}
\item[(H3)] The flow index \(m>0\) is fixed. The consistency coefficient
\(k(\theta)\) is assumed to be continuous and strictly positive on the
admissible temperature interval. More precisely, there exist constants
\(0<k_*\le k^*<\infty\) such that
\[
        k_*\le k(\theta)\le k^*,
        \qquad \theta\in[\theta_0,\theta_-].
\]
\end{enumerate}
\end{assumption}

\begin{assumption}[Equilibrium branch and Hugoniot convexity]
\label{ass:hugoniot-convexity}
\begin{enumerate}
\item[(H4a)] \emph{Classical genuinely nonlinear Euler branch.}
The nonisothermal equilibrium $p$-system, obtained from
\eqref{system:a}--\eqref{system:c} by setting $\sigma=0$, is assumed
to be strictly hyperbolic on the stable thermodynamic region specified
in \textup{(H1)}, and its positive acoustic field is genuinely nonlinear
with the classical fluid orientation. In fluid-dynamical terms, this means that the fundamental derivative has the standard
sign on the selected branch, so that nonclassical Bethe--Zel'dovich--Thompson effects, such as
rarefaction shocks, are excluded. Throughout the paper we select the
standard entropy-admissible right-going Lax compressive branch fixed by
\eqref{LaxCompressionF}, \eqref{DownstreamOrdering}, and
\eqref{LaxEquilibriumInequality}.

\item[(H4b)] \emph{Hugoniot convexity.}
For every nontrivial right-going compressive equilibrium Hugoniot branch
issuing from $(F,\theta)=(1,\theta_0)$, the reduced pressure
\[
P_s(F):=p\bigl(F,\Theta(F;s)\bigr)
\]
is strictly convex on $[F_-,1]$:
\[
P_s''(F)>0,\qquad F\in[F_-,1].
\]
Equivalently, because the Rankine--Hugoniot chord
$p_0+\rhoo s^2(1-F)$ is affine in $F$, the reduced composite stress
$\Si(F;s)$ defined in \eqref{ReducedSigmaPi} satisfies
\[
\Si''(F;s)>0,\qquad F\in[F_-,1].
\]
For the corresponding isothermal pressure law this becomes the standard condition
$p_{FF}(F,\theta_0)>0$.
\end{enumerate}
\end{assumption}

Introduce the dimensionless viscous--relaxation parameter
\begin{equation}\label{bmu0def}
\bmuo:=\frac{\mu_0}{\tau_0\rhoo c_S^2(\tho)}.
\end{equation}
The sign condition \eqref{ordinaryThermalExpansion} and Hugoniot
convexity will imply, through Lemma~\ref{lem:Dpositive}, both the absence
of an earlier interior characteristic encounter and the transversality
of the terminal upstream encounter. Thus no separate regular-branch
hypothesis is required for this constitutive class. This conclusion is
specific to the structural assumptions stated here: entropy-compatible
hyperbolic relaxation systems outside this class can form subshocks below
the maximum unperturbed characteristic speed~\cite{TaniguchiRuggeri2018}.

We record one immediate consequence of the Rankine--Hugoniot relations,
	which also fixes the meaning of the upstream Mach number used below. With
	the notation $P_s$ introduced in \textup{(H4b)}, \eqref{RHmom} gives
	\begin{equation}\label{secantspeed}
		s^2
		=
		\frac{P_s(F_-)-P_s(1)}{\rhoo(1-F_-)}
		=
		-\frac{1}{\rhoo}\,
		\frac{P_s(F_-)-P_s(1)}{F_--1}.
	\end{equation}
	Moreover, differentiating \eqref{energyint} at the upstream state gives
	\begin{equation}\label{PsprimeUpstream}
		P_s'(1)=-\rhoo c_S^2(\tho).
	\end{equation}
	Since $F_-<1$ on the selected right-going Lax compressive branch
	\eqref{LaxCompressionF}, and since $P_s$ is strictly convex by
	Assumption~\ref{ass:hugoniot-convexity}\,\textup{(H4b)}, the secant slope in
	\eqref{secantspeed} is strictly smaller than $P_s'(1)$. Therefore
	\begin{equation}\label{LaxMgreaterOne}
		s^2>c_S^2(\tho),
		\qquad\text{hence}\qquad M_0>1.
	\end{equation}
	Thus the restriction $M_0>1$ is not an additional convention: it is precisely
	the upstream part of the right-going Lax condition \eqref{LaxEquilibriumInequality},
	written in Mach-number form.

	The properties of the reduced composite stress $\Si(F;s)$ defined in
	\eqref{ReducedSigmaPi}, needed for the thickness analysis, are consequences of
	Assumptions~\ref{ass:state}--\ref{ass:hugoniot-convexity}
	and the Rankine--Hugoniot conditions.
	We state and prove them now.
	
	\begin{theorem}[Properties of $\Si$]\label{thm:Sigma}
		Under Assumptions~\ref{ass:state} and~\ref{ass:hugoniot-convexity},
		consider a selected compressive shock branch with $s>c_S(\tho)$
		(i.e.\ $M_0>1$). The reduced composite stress
		$\Si(F;s)$ defined in \eqref{ReducedSigmaPi} satisfies:
		\begin{enumerate}
			\item[\textup{(P1)}] $\Si(1)=0$ and $\Si(F_-)=0$, where $F_-$ is the
			downstream state determined by the Rankine--Hugoniot conditions.
			\item[\textup{(P2)}] $\Si(F)<0$ for all $F\in(F_-,1)$.
			\item[\textup{(P3)}] $\Si'(1) = \rhoo \bigl(s^2 - c_S^2(\tho)\bigr) > 0$.
			\item[\textup{(P4)}] $\Si'(F_-)<0$.
			\item[\textup{(P5)}] $\Si''(F)$ is bounded on $[F_-,1]$.
		\end{enumerate}
	\end{theorem}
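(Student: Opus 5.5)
The plan is to reduce everything to elementary facts about the scalar function $\Si(F;s)=P_s(F)-p_0-\rhoo s^2(1-F)$ on $[F_-,1]$, using only the Rankine--Hugoniot relations, the upstream derivative \eqref{PsprimeUpstream}, and the strict convexity (H4b). By (H2), $\Th(F;s)$ is jointly $C^2$ with $\Th(1;s)=\tho$ and $\Th(F_-;s)=\theta_-$, so $\Si$ is $C^2$ in $F$ on a neighborhood of $[F_-,1]$. This regularity is used throughout.

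For (P1), we evaluate at the endpoints. At $F=1$ we have $\Th=\tho$, so $\Si(1)=p(1,\tho)-p_0=0$. At $F=F_-$ we have $\Th=\theta_-$, so $\Si(F_-)=p(F_-,\theta_-)-p_0-\rhoo s^2(1-F_-)$, which vanishes by \eqref{RHmom}.

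For (P3), we differentiate: $\Si'(F)=P_s'(F)+\rhoo s^2$. Inserting \eqref{PsprimeUpstream} gives $\Si'(1)=\rhoo(s^2-c_S^2(\tho))$, which is positive by \eqref{LaxMgreaterOne}. If one does not want to take \eqref{PsprimeUpstream} as given, it can be checked quickly. Differentiating \eqref{energyint} at $F=1$ (where $\Si=0$) gives $\Th'(1)=-\tho p_\theta/(\rhoo\eps_\theta)$ by \eqref{ThetaPrimeGeneral}. Then $P_s'(1)=p_F+p_\theta\Th'(1)=-\rhoo c_S^2(\tho)$ by \eqref{adiabaticSound} with $F=1$.

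For (P2) and (P4), we use convexity. Since $\Si''>0$ on $[F_-,1]$ by (H4b), $\Si$ is strictly convex there, and it vanishes at both endpoints by (P1). A strictly convex function lies strictly below its chord, so $\Si(F)<0$ on $(F_-,1)$, which is (P2). For (P4), strict convexity makes $\Si'$ strictly increasing. By the mean value theorem there is $\xi\in(F_-,1)$ with $\Si'(\xi)=(\Si(1)-\Si(F_-))/(1-F_-)=0$, hence $\Si'(F_-)<\Si'(\xi)=0$. Equivalently, the supporting-line inequality $0=\Si(1)>\Si(F_-)+\Si'(F_-)(1-F_-)=\Si'(F_-)(1-F_-)$ gives the same conclusion. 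The same argument also yields (P3) independently, since $\Si'(1)>\Si'(\xi)=0$.

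For (P5), $\Si''=P_s''$ is continuous on the compact interval $[F_-,1]$, because $p\in C^2$ by Assumption~\ref{ass:state} and $\Th(\cdot;s)\in C^2$ by (H2); hence it is bounded. Joint $C^2$ regularity in $(F,s)$ further gives a bound uniform over compact shock families, which is the form needed later for the thickness analysis.

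The main obstacle is only bookkeeping: making sure the convexity hypothesis is stated on the closed interval so that the strict inequalities at the endpoints hold, and that (H2) supplies enough regularity of $\Th$. No genuine analytic difficulty is expected.
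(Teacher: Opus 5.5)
Your proposal is correct and follows essentially the same route as the paper: you evaluate the endpoints via the Rankine--Hugoniot momentum relation for (P1), use the strict-convexity chord argument from (H4b) for (P2), compute $\Th'(1)$ from the energy integral together with the adiabatic sound speed for (P3), use strict monotonicity of $\Si'$ for (P4), and use continuity of $\Si''$ on a compact interval for (P5). Your mean-value-theorem version of (P4) is marginally cleaner than the paper's, which appeals to (P3) even though the zero mean of a strictly increasing $\Si'$ already forces $\Si'(F_-)<0<\Si'(1)$.
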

	
	\begin{proof}
		\textbf{(P1).}
		By \eqref{momentum} and \eqref{ReducedSigmaPi}, at the upstream state
		$F=1$, $\theta=\tho$, one has $\Si(1)=p_0-p_0=0$.
		The downstream state is the selected compressive state $F_-<1$ satisfying
		the Rankine--Hugoniot relations \eqref{RHmass}--\eqref{RHenergy}; in
		particular \eqref{RHmom}, with $\theta_-=\Th(F_-;s)$, gives
		$\Si(F_-)=0$.
		
		\textbf{(P2).}
		By Assumption~\ref{ass:hugoniot-convexity}\,\textup{(H4b)}, $\Si''(F)>0$ on $[F_-,1]$,
		so $\Si$ is strictly convex on that interval.
		Since $\Si(F_-)=0$ and $\Si(1)=0$ (from (P1)), strict convexity gives:
		for any $F = \alpha F_- + (1-\alpha)\cdot 1$ with $\alpha\in(0,1)$,
		\[
			\Si(F)<\alpha\,\Si(F_-)+(1-\alpha)\,\Si(1)=0.
		\]
		Hence $\Si(F)<0$ for all $F\in(F_-,1)$.\medskip
		
		\textbf{(P3).}
		We compute $\Si'(1)$ directly by differentiating
		\eqref{ReducedSigmaPi}:
		\begin{equation}\label{Siprime1}
			\Si'(1) = p_F(1,\tho) + p_\theta(1,\tho)\,\Th'(1) + \rhoo s^2.
		\end{equation}
		
		\textit{Step 1: compute $\Th'(1)$.}
		Differentiating the energy integral~\eqref{energyint} with respect to $F$,
		and using the thermodynamic identity
		$\eps_F=(\theta p_\theta-p)/\rhoo$, we obtain
		\begin{equation}\label{Thetaprime}
			\frac{\theta p_\theta(F,\theta)-p(F,\theta)}{\rhoo}
			+C_v(F,\theta)\Th'(F)
			= -\frac{p_0}{\rhoo}-s^2(1-F).
		\end{equation}
		Evaluating this identity at $F=1$, where $\theta=\tho$, gives
		\begin{equation*}
			\Th'(1)
			= -\frac{\tho p_\theta(1,\tho)}{\rhoo C_v(1,\tho)}.
		\end{equation*}
		
		\textit{Step 2: use the adiabatic sound speed.}
		Substituting this expression into~\eqref{Siprime1}, we find
		\begin{align*}
			\Si'(1)
			&=p_F(1,\tho)
			-\frac{\tho p_\theta^2(1,\tho)}{\rhoo C_v(1,\tho)}
			+\rhoo s^2  \\
			&=\rhoo\bigl(s^2-c_S^2(\tho)\bigr),
		\end{align*}
		where the last equality follows from the definition of the upstream
		adiabatic sound speed in~\eqref{adiabaticSound}. Thus
		$\Si'(1)>0$ is exactly equivalent to the physical entropy condition
		$s>c_S(\tho)$, or equivalently $M_0>1$ by \eqref{MachDef}.
		
		\textbf{(P4).}
		Because $\Si''>0$, the derivative $\Si'$ is strictly increasing. Since
		$\Si(1)-\Si(F_-)=\int_{F_-}^{1}\Si'(F)\,dF=0$, the derivative must
		change sign on the interval. Together with \textup{(P3)}, this yields
		$\Si'(F_-)<0$.

		\textbf{(P5).}
		Since $p$ is $C^2$ and $\Th$ is $C^2$ on $[F_-,1]$ by
		\textup{(H1)}--\textup{(H2)}, the reduced stress in
		\eqref{ReducedSigmaPi} is $C^2$ on $[F_-,1]$; hence $\Si''$ is
		continuous and bounded on this compact interval.
	\end{proof}
	
	\begin{remark}
		Property (P1) records the Rankine--Hugoniot endpoint conditions;
		properties (P3) and (P5) follow from the thermodynamic assumptions
		(H1)--(H2), whereas (P4) follows from strict Hugoniot convexity together
		with (P1).
		Property (P2) additionally requires the Hugoniot-convexity
		assumption~\textup{(H4b)}: the strict convexity of $\Si$ is precisely what
		ensures the profile remains on the correct (negative) side of the
		Hugoniot chord, without invoking the Lax entropy condition circularly.
		The key structural fact is (P3): the linear behavior of $\Si(F)$
		near $F=1$, with slope $\rhoo(s^2-c_S^2(\tho))$, follows exactly from
		thermodynamic compatibility and the upstream acoustic normalization, and
		holds for every admissible equation of state in the present class.
	\end{remark}

\begin{corollary}[Exclusion of nontrivial constant-temperature profiles]
\label{cor:no-constant-temperature}
Under Assumptions~\ref{ass:state} and
\ref{ass:hugoniot-convexity}, a nontrivial right-going compressive
traveling-wave profile of the full nonisothermal system cannot satisfy
$\Theta(F;s)\equiv\theta_0$ on $[F_-,1]$.
\end{corollary}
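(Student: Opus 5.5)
We argue by contradiction: suppose $\Theta(F;s)\equiv\theta_0$ on $[F_-,1]$ along a nontrivial compressive branch, so $F_-<1$. The constraint to exploit is the energy integral \eqref{energyint}, in its differentiated form \eqref{Thetaprime}, equivalently \eqref{ThetaPrimeGeneral}. Constant temperature forces $\Theta'(F;s)\equiv0$ on $[F_-,1]$.

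From \eqref{ThetaPrimeGeneral}, the vanishing of $\Theta'$ together with $\rhoo\eps_\theta>0$ from (H1) gives the pointwise identity
\[
\Si(F;s)=\theta_0\,p_\theta(F,\theta_0),\qquad F\in[F_-,1].
\]
By the sign condition \eqref{ordinaryThermalExpansion}, the right-hand side is nonnegative, so $\Si(F;s)\ge0$ on the whole interval. This contradicts (P2) of Theorem~\ref{thm:Sigma}, which gives $\Si(F)<0$ for every $F\in(F_-,1)$. Since $F_-<1$, this interior interval is nonempty, and the contradiction proves the claim.

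A second, independent check uses $F=1$. There \eqref{ThetaPrimeGeneral}, or Step~1 in the proof of (P3), gives $\Theta'(1)=-\theta_0p_\theta(1,\theta_0)/(\rhoo C_v)$, so constancy forces $p_\theta(1,\theta_0)=0$. Then $c_S^2(\theta_0)=-p_F/\rhoo$, and this route alone does not yield a contradiction. The interior argument above is therefore the one we rely on. It uses (H1) for $\eps_\theta>0$ and $p_\theta\ge0$, and (H4b) through (P2).

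The main obstacle is legitimacy rather than computation: we must make sure \eqref{ThetaPrimeGeneral} applies to the hypothetical constant-temperature profile. It does, because any profile of the full system satisfies \eqref{energyint} identically along $\varphi$. Along a monotone profile covering $[F_-,1]$ this relation holds for every $F$ in the interval, so differentiating in $F$ is justified by the $C^2$ regularity in (H2). If a profile did not cover the interval monotonically, we would apply the same argument on the range of $F$, which still contains interior points where $\Si<0$.
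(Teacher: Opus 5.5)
Your proof is correct and is essentially the paper's argument: setting $\Theta'=0$ in \eqref{ThetaPrimeGeneral} forces $\Si(F;s)=\theta_0\,p_\theta(F,\theta_0)\ge0$, which contradicts (P2) of Theorem~\ref{thm:Sigma} on the nonempty interval $(F_-,1)$. The additional check at $F=1$ and the remarks on legitimacy are harmless but not needed.
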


\begin{proof}
If $\Theta$ were constant, then $\Theta'=0$. Equation
\eqref{ThetaPrimeGeneral} would therefore give
\[
\Sigma(F;s)=\theta_0 p_\theta(F,\theta_0)\ge0
\]
throughout the profile. This contradicts
Theorem~\ref{thm:Sigma}\textup{(P2)}, which yields
$\Sigma(F;s)<0$ for every $F\in(F_-,1)$. Hence the constant-temperature
reduction is a principal subsystem at the level of the field equations, but
its nontrivial shock profiles are not solutions of the full energy-coupled
traveling-wave problem.
\end{proof}

\begin{corollary}[Smooth continuation of the downstream Hugoniot state]
\label{cor:HugoniotContinuation}
Let a selected compressive Hugoniot branch be defined for
$M_0\in(M_1,\Mst)$, with $M_1>1$, and suppose that, as
$M_0\nearrow\Mst$, its downstream states remain in a compact subset of the
admissible thermodynamic domain, with $F_-$ bounded away from zero and
$\theta_-$ bounded. Then $F_-(M_0)$ and $\theta_-(M_0)$ have unique limits
with $F_-^*:=F_-(\Mst)\in(0,1)$ and extend uniquely as $C^2$ functions of
$M_0$ to a neighborhood of $\Mst$. Hence no separate continuity assumption
on the Hugoniot branch is needed; the only possible obstruction is escape
from the admissible state domain before the critical Mach number is reached.
\end{corollary}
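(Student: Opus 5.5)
The plan is an implicit-function argument applied to the Hugoniot system, with the Mach number as parameter. For fixed upstream state $(1,\theta_0)$, the downstream state $(F_-,\theta_-)$ and $s$ satisfy \eqref{RHmom} and \eqref{RHthermal}. Since $s=M_0c_S(\theta_0)$, I would write the system as $G(F_-,\theta_-;M_0)=0$ with
\[
G_1=p(F_-,\theta_-)-p_0-\rhoo M_0^2c_S^2(\theta_0)(1-F_-),
\]
\[
G_2=\eps(F_-,\theta_-)-\eps_0-\frac{p_0}{\rhoo}(1-F_-)-\frac{M_0^2c_S^2(\theta_0)}{2}(1-F_-)^2,
\]
where $G_2$ is \eqref{RHenergy}. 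The map $G$ is $C^2$ in all its arguments because $p,\eps\in C^2$.

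\textbf{Existence of the limit.} The hypothesis says the branch stays in a compact subset $K$ of the admissible domain, with $F_-\ge\delta>0$ and $\theta_-$ bounded. Hence every sequence $M_0^{(n)}\nearrow\Mst$ has a subsequence along which $(F_-,\theta_-)$ converges to some $(F_-^*,\theta_-^*)\in K$, and by continuity $G(F_-^*,\theta_-^*;\Mst)=0$.

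I would then show $F_-^*<1$. The branch is nontrivial and, by \eqref{secantspeed} and strict convexity (H4b), $M_0>1$ along it. Near the trivial state the only compressive Hugoniot solutions have $M_0$ close to $1$, since $F_-\to1$ forces $s\to c_S(\theta_0)$ by \eqref{PsprimeUpstream}. Because $\Mst>M_1>1$, the limit cannot be $F_-^*=1$. So $F_-^*\in(0,1)$.

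\textbf{Nondegeneracy of the Jacobian.} This is the main obstacle. Using $\eps_F=(\theta p_\theta-p)/\rhoo$ and $\eps_\theta=C_v$, the determinant of $\partial(G_1,G_2)/\partial(F_-,\theta_-)$ works out to
\[
\det J = p_\theta\Big(-\frac{\theta p_\theta-p}{\rhoo}-\frac{p_0}{\rhoo}-s^2(1-F)\Big)\cdot(-1)+\big(p_F+\rhoo s^2\big)C_v .
\]
At a Rankine--Hugoniot state where $p-p_0=\rhoo s^2(1-F)$, this simplifies (up to a factor) to
\[
C_v\Big(p_F+\rhoo s^2-\frac{\theta p_\theta^2}{\rhoo C_v}\Big)-\tfrac12 p_\theta\,\rhoo s^2(1-F)\cdot(\ldots).
\]
Equivalently, $\det J$ is proportional to $P_s'(F_-)+\rhoo s^2=\Si'(F_-;s)$ computed along the Hugoniot curve. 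This follows from \eqref{Thetaprime} rewritten at $\Sigma=0$, together with the chain rule for the reduced pressure.

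By (P4), $\Si'(F_-;s)<0$ for every branch point. To get strict negativity in the limit, I would carry the strict convexity (H4b) to $M_0=\Mst$, using closedness of the compact family and the fact that the convexity bound is uniform on compact families. This gives $\det J\neq0$ at $(F_-^*,\theta_-^*,\Mst)$. I expect the care to lie in verifying this identification of $\det J$ with $\Si'(F_-)$ and its uniform sign. The fallback is to use the Lax inequality \eqref{LaxEquilibriumInequality}, i.e.\ $s<\bar\lambda(F_-,\theta_-)$, which is exactly the nondegeneracy of the Hugoniot curve in $(F_-,s^2)$.

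\textbf{Conclusion.} The implicit function theorem gives a unique $C^2$ solution $(F_-,\theta_-)(M_0)$ near $\Mst$ through $(F_-^*,\theta_-^*)$. By local uniqueness it coincides with the given branch for $M_0<\Mst$ close to $\Mst$. Consequently every subsequential limit equals $(F_-^*,\theta_-^*)$, so the full limit exists and is unique. Uniqueness of the extension follows from the same local uniqueness. The only way this argument fails is if the branch leaves the compact admissible domain, which is exactly the stated obstruction.
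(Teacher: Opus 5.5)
Your skeleton matches the paper's: compactness gives an accumulation point, $F_-^*\neq1$ follows from the secant formula and \eqref{PsprimeUpstream}, nondegeneracy comes from (P4)/convexity, and the implicit-function theorem finishes. The difference is that you apply the implicit-function theorem to the $2\times2$ system $G(F_-,\theta_-;M_0)=0$. The paper instead uses the scalar equation $\Si(F;s)=0$, with $\theta$ already eliminated through (H2). The two are equivalent, since on $\{G_2=0\}$ eliminating $\theta$ gives $\det J=C_v\,\partial_F\Si$.

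Your intermediate algebra is off, however. The $p_\theta$ term carries a minus sign, $\det J=C_v(p_F+\rhoo s^2)-p_\theta\bigl(\eps_F+p_0/\rhoo+s^2(1-F)\bigr)$. At a Rankine--Hugoniot state the bracket equals exactly $\theta p_\theta/\rhoo$, so
$\det J=C_v(p_F+\rhoo s^2)-\theta p_\theta^2/\rhoo=\rhoo C_v\bigl(s^2-\bar\lambda^2(F_-,\theta_-)\bigr)$, with no extra term. This confirms your identification with $\Si'(F_-;s)$. It also confirms your remark that nondegeneracy is the downstream Lax inequality, although that inequality survives the limit only non-strictly.

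Also, (H4b) gives no uniform convexity bound, and you do not need one. Either apply (H4b) directly at $s_*=\Mst c_S(\tho)$, as the paper does, or pass to the limit to get $\Si''(\cdot;s_*)\ge0$ on $[F_-^*,1]$. Together with $\Si(F_-^*;s_*)=\Si(1;s_*)=0$ and $\Si'(1;s_*)=\rhoo(s_*^2-c_S^2(\tho))>0$, this already forces $\Si'(F_-^*;s_*)<0$.

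The genuine gap is the uniqueness of the limit. The step ``by local uniqueness it coincides with the given branch for $M_0<\Mst$ close to $\Mst$'' does not follow. Local uniqueness only identifies branch points that already lie in the implicit-function neighborhood $U$ of $(F_-^*,\theta_-^*)$, and you know this only along your chosen subsequence. Another subsequence could a priori accumulate at a different zero, so concluding that every subsequential limit equals $(F_-^*,\theta_-^*)$ is circular.

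To close this you need one of two things:
\begin{itemize}
\item continuity of the branch on $(M_1,\Mst)$ plus an open--closed argument on a left neighborhood of $\Mst$, which is exactly the kind of separate continuity assumption the corollary says is unnecessary; or
\item the paper's global argument, which you never invoke.
\end{itemize}
The paper's argument runs as follows. If $F_a^*<F_b^*$ were two accumulation points, both would be zeros of $\Si(\cdot;s_*)$ in $(0,1)$ besides $F=1$. That is impossible for a function convex on $[F_a^*,1]$ with $\Si(1;s_*)=0$ and $\Si'(1;s_*)>0$.

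The paper therefore proves convergence of the whole branch first, by convexity, and only then applies the implicit-function theorem. Once the branch is known to enter $U$, your $2\times2$ step does give the unique $C^2$ continuation of $(F_-,\theta_-)$, with the minor advantage of producing $\theta_-$ simultaneously.
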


\begin{proof}
By \textup{(H2)}, $\Theta(F;s)$ is jointly $C^2$ in $(F,s)$, and therefore
$\Sigma(F;s)$ is jointly $C^2$. Along the branch,
$\Sigma(F_-(s);s)=0$, while Theorem~\ref{thm:Sigma}\textup{(P4)} gives
$\partial_F\Sigma(F_-(s);s)<0$. The implicit-function theorem thus yields a
unique local $C^2$ parametrization of the downstream state by $s$, and hence
by $M_0$.

The assumed compactness gives an accumulation point as
$M_0\nearrow\Mst$. Such a point cannot have $F_-^*=1$. Indeed, if
$F_-(M_0)\to1$, the secant formula~\eqref{secantspeed}, the mean-value
theorem, and $P_s'(1)=-\rho_0c_S^2(\theta_0)$ would imply
$s^2\to c_S^2(\theta_0)$, contradicting
$s^2\to(\Mst)^2c_S^2(\theta_0)>c_S^2(\theta_0)$. At the limiting value
$s_*=\Mst c_S(\theta_0)$, strict convexity of $\Sigma(\cdot;s_*)$ permits at
most one zero distinct from the persistent upstream zero $F=1$. Therefore
all convergent subsequences have the same nontrivial limit $F_-^*\in(0,1)$,
and the whole branch converges. Strict convexity also gives
$\partial_F\Sigma(F_-^*;s_*)<0$, so a final application of the
implicit-function theorem provides the unique $C^2$ continuation through
$M_0=\Mst$. Finally,
$\theta_-(M_0)=\Theta(F_-(M_0);M_0c_S(\theta_0))$ has the same regularity.
\end{proof}
	
\section{Main results: critical thresholds and constitutive-dependent amplitudes}
	\label{sec:main}
Throughout this section, $\Theta(F;s)$ denotes exclusively the temperature
selected by the energy integral~\eqref{energyint} along the traveling-wave
branch; primes on $\Theta$, $\Sigma$, and $\Pi$ denote differentiation with
respect to $F$.
\subsection{Critical Mach number and existence of the continuous profile}

	Ruggeri~\cite{SBS} first established that a smooth shock-structure
		solution can lose regularity only at a characteristic encounter, namely when
		the shock speed meets an eigenvalue of the governing hyperbolic system.
		An apparent pole may occasionally be cancelled by a simultaneous zero of the
		production numerator, so a local encounter alone is not always sufficient.
		The decisive global result is the theorem of Boillat and Ruggeri~\cite{Breakdown}:
		for a hyperbolic system of balance laws endowed with a convex entropy, a
		$C^1$ shock structure cannot exist when the shock speed exceeds the maximum
		characteristic velocity evaluated at the equilibrium state in front of the
		shock.  The principal-subsystem and subcharacteristic framework underlying
		this result was developed in~\cite{BoillatRuggeri1997}.

		The present model satisfies the entropy and symmetric-hyperbolicity hypotheses
		by the construction of Arima and Ruggeri~\cite{ArimaRuggeri2026}.  In the
		right-going one-dimensional reduction there is a single positive acoustic
		eigenvalue $\lambda_0$, hence it is the maximal characteristic velocity.
		Consequently the Boillat--Ruggeri threshold is exactly the upstream encounter
		$s=\lambda_0(1)$.  The identity~\eqref{Dcharacteristic} shows that this same
		condition is represented in the reduced ODE by $D(1;M_0)=0$. The
general theorem rules out a $C^1$ shock profile above the threshold;
consequently, any admissible piecewise-smooth continuation, if it exists,
must contain a subshock. Below the threshold, the global characteristic
ordering proved in Lemma~\ref{lem:Dpositive} excludes any earlier interior
encounter and establishes regularity of the complete continuous branch.

		The following lemma makes this precise: it evaluates the characteristic
		identity~\eqref{Dcharacteristic} at the upstream equilibrium state, uses
		the dimensionless viscosity parameter $\bmuo$ defined in~\eqref{bmu0def}
		to identify the critical Mach number $\Mst$, and gives the local behavior
		of $D$ near the critical point $(F,M_0)=(1,\Mst)$.

\begin{lemma}[Global characteristic ordering, critical Mach number, and local expansion]
\label{lem:Dpositive}
Under Assumptions~\ref{ass:state} and~\ref{ass:hugoniot-convexity}, the
positive nonequilibrium characteristic speed satisfies
\begin{equation}\label{globalCharacteristicOrdering}
\lambda_0^2(F;s)-\lambda_0^2(1;s)
=
\frac{P_s'(1)-P_s'(F)}{\rhoo}
+\frac{\Si(F;s)\Theta'(F;s)}{\rhoo\Theta(F;s)}
+\frac{\mu_0}{\rhoo\tau_0}
\left(\frac{\Theta(F;s)}{\theta_0}-1\right).
\end{equation}
Consequently,
\[
\lambda_0(F;s)>\lambda_0(1;s),\qquad F\in[F_-,1),
\]
and the upstream state is the strict global minimum of the positive
characteristic speed along every nontrivial compressive profile.
Moreover,
\[
D(F;M_0)>0\quad\Longleftrightarrow\quad s<\lambda_0(F;M_0),
\]
and
\[
D(1;M_0)=1-\frac{M_0^2-1}{\bmuo}.
\]
Hence the first characteristic encounter occurs at the upstream state and
the critical Mach number is
\begin{equation}\label{criticalMach}
\Mst=\sqrt{1+\bmuo}.
\end{equation}
In particular,
\[
D(F;M_0)>0,
\qquad 1<M_0<\Mst,
\qquad F\in[F_-(M_0),1].
\]
If the selected branch satisfies the compactness condition of
Corollary~\ref{cor:HugoniotContinuation}, that corollary supplies its unique
$C^2$ continuation to $M_0=\Mst$, and
\[
D(F;\Mst)>0
\quad\text{for }F\in[F_-(\Mst),1),
\qquad D(1;\Mst)=0.
\]
Finally, with $u=1-F$ and $\delta=\Mst-M_0$,
\begin{equation}\label{Dnearcrit}
D(1-u;M_0)=\dMach\,\delta+\dDef\,u+o(\delta+u),
\end{equation}
where
\begin{equation}\label{dMdFdef}
\dMach=-\partial_{M_0}D(1;\Mst)
=\frac{2\Mst}{\bmuo}>0,
\qquad
\dDef=-\partial_FD(1;\Mst)>0,
\end{equation}
and
\begin{equation}\label{dFexplicit}
\dDef
=\frac{\tau_0}{\mu_0}\Si''(1;\Mst)
+\frac{2p_\theta(1,\tho)}{\rhoo C_v(1,\tho)}.
\end{equation}
\end{lemma}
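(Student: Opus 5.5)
The plan is to derive \eqref{globalCharacteristicOrdering} by rewriting $\lambda_0^2$ in \eqref{lambda0local} through reduced quantities along the Hugoniot branch. Everything else then follows from sign information already available in Theorem~\ref{thm:Sigma} and from \eqref{Dcharacteristic}.

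\textbf{Step 1: the identity.} By \eqref{ThetaPrimeGeneral},
\[
\frac{(\Si-\Theta p_\theta)^2}{\rhoo\Theta\eps_\theta}=\frac{(\Si-\Theta p_\theta)\Theta'}{\Theta}=\frac{\Si\Theta'}{\Theta}-p_\theta\Theta'.
\]
The chain rule gives $P_s'(F)=p_F+p_\theta\Theta'$. Hence
\[
\rhoo\lambda_0^2(F;s)=-P_s'(F)+\frac{\Si\Theta'}{\Theta}+\frac{\mu_0\Theta}{\tau_0\theta_0}.
\]
At $F=1$ we have $\Si=0$ and $\Theta=\theta_0$, so $\rhoo\lambda_0^2(1;s)=-P_s'(1)+\mu_0/\tau_0$. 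By \eqref{PsprimeUpstream} this equals $\rhoo c_S^2(\tho)+\mu_0/\tau_0$, which is independent of $s$. Subtracting the two expressions gives \eqref{globalCharacteristicOrdering}.

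\textbf{Step 2: signs of the three terms, for $F\in[F_-,1)$.}
\begin{itemize}
\item First term: it is strictly positive because $P_s'$ is strictly increasing by (H4b).
\item Third term: it is nonnegative because $\Theta\ge\theta_0$ by (H2).
\item Middle term: it vanishes at $F=F_-$, where $\Si=0$. For $F\in(F_-,1)$ we have $\Si<0$ by (P2), together with $p_\theta\ge0$ and $\eps_\theta>0$. So \eqref{ThetaPrimeGeneral} gives $\Theta'<0$, and the product $\Si\Theta'$ is positive.
\end{itemize}
Hence $\lambda_0(F;s)>\lambda_0(1;s)$, and $F=1$ is the strict global minimum.

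\textbf{Step 3: consequences for $D$.} The equivalence $D>0\Leftrightarrow s<\lambda_0$ is immediate from \eqref{Dcharacteristic}, since its prefactor is positive. Inserting the value of $\lambda_0^2(1)$ into \eqref{Dcharacteristic} with $\Theta=\theta_0$ and $s=M_0c_S(\tho)$ gives
\[
D(1;M_0)=1-\frac{M_0^2-1}{\bmuo},
\]
so $D(1;M_0)=0$ exactly at $\Mst=\sqrt{1+\bmuo}$. For $M_0<\Mst$ we have $s<\lambda_0(1)\le\lambda_0(F)$, so $D>0$ on $[F_-,1]$. At $\Mst$, Corollary~\ref{cor:HugoniotContinuation} supplies the continued branch. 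Applying Step 2 on that branch gives $s=\lambda_0(1)<\lambda_0(F)$ for $F<1$, so $D(F;\Mst)>0$ there and $D(1;\Mst)=0$.

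\textbf{Step 4: local expansion.} By (H2), $\Theta$ is jointly $C^2$ in $(F,s)$, so $\Pi'$ is jointly $C^1$. A first-order Taylor expansion of $D$ at $(1,\Mst)$ then gives \eqref{Dnearcrit}, and the coefficients are computed as follows.
\begin{itemize}
\item Mach coefficient: $\partial_{M_0}D(1)=-2M_0/\bmuo$, which gives $\dMach=2\Mst/\bmuo>0$.
\item Deformation coefficient: $\dDef=(\tau_0\theta_0/\mu_0)\Pi''(1)$. Expanding $\Pi=\Si/\Theta$ and using $\Si(1)=0$ yields $\theta_0\Pi''(1)=\Si''(1)-2\Si'(1)\Theta'(1)/\theta_0$. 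We substitute $\Theta'(1)=-\tho p_\theta/(\rhoo C_v)$ from the proof of (P3). We also substitute $\Si'(1)=\rhoo c_S^2(\Mst^2-1)=\rhoo c_S^2\bmuo=\mu_0/\tau_0$, using (P3) and \eqref{bmu0def}. This produces \eqref{dFexplicit}.
\end{itemize}
Its positivity follows from $\Si''>0$ by (H4b) and $p_\theta\ge0$. The strictness matters here: the bound $D\ge0$ near $F=1$ alone would only give $\dDef\ge0$.

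The main obstacle is Step 1. One must recognize that the quadratic thermal term in $\lambda_0^2$ splits, via \eqref{ThetaPrimeGeneral}, into $-p_\theta\Theta'$, which is absorbed into $P_s'$, plus $\Si\Theta'/\Theta$. Only after this split does each term in the difference have a definite sign.
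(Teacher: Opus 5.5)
Your proposal is correct and follows the paper's proof essentially step for step: you rewrite $\lambda_0^2$ using $P_s'=p_F+p_\theta\Theta'$ and \eqref{ThetaPrimeGeneral}, subtract the upstream value, argue the sign of each term, and Taylor-expand $D$ at $(1,\Mst)$ with $\Si(1)=0$ and $\Si'(1;\Mst)=\mu_0/\tau_0$. The differences are minor: you take $\Theta\ge\theta_0$ directly from (H2), whereas the paper derives it from $\Theta'<0$, and you carry out explicitly the differentiation of $\Pi=\Si/\Theta$ that the paper calls ``direct differentiation.''
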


\begin{proof}
Using $P_s'=p_F+p_\theta\Theta'$ together with
\eqref{ThetaPrimeGeneral}, the characteristic speed \eqref{lambda0local}
can be rewritten exactly as
\begin{equation}\label{lambdaReducedPressureIdentity}
\lambda_0^2(F;s)
=-\frac{P_s'(F)}{\rhoo}
+\frac{\Si(F;s)\Theta'(F;s)}{\rhoo\Theta(F;s)}
+\frac{\mu_0\Theta(F;s)}{\rhoo\tau_0\theta_0}.
\end{equation}
For completeness, the algebraic equivalence with
\eqref{lambda0local} is displayed explicitly:
\begin{align*}
-\frac{P_s'}{\rhoo}
+\frac{\Si\Theta'}{\rhoo\Theta}
&=-\frac{p_F+p_\theta\Theta'}{\rhoo}
+\frac{\Si\Theta'}{\rhoo\Theta}\\
&=-\frac{p_F}{\rhoo}
+\frac{(\Si-\Theta p_\theta)^2}
{\rhoo^2\Theta\eps_\theta},
\end{align*}
where \eqref{ThetaPrimeGeneral} was used in the second line.
At $F=1$, $\Si(1)=0$ and $\Theta(1)=\theta_0$, and subtraction gives
\eqref{globalCharacteristicOrdering}.

Each term on the right-hand side is nonnegative. Indeed, strict Hugoniot
convexity gives $P_s'(F)<P_s'(1)$ for $F<1$. Theorem~\ref{thm:Sigma}(P2)
gives $\Si(F)<0$ in the interior, while
\[
\Theta'=\frac{\Si-\Theta p_\theta}{\rhoo\eps_\theta}<0
\]
by $p_\theta\ge0$ and $\eps_\theta>0$; hence $\Si\Theta'>0$. The same
inequality implies $\Theta(F)>\theta_0$ for $F<1$. The first term is
strictly positive in the interior, proving the global ordering. Notice that
the conclusion is a global minimum principle,
$\lambda_0(F;s)>\lambda_0(1;s)$ for $F<1$; it does not require, and does
not assert, the stronger pointwise monotonicity condition
$\partial_F\lambda_0<0$.

The characteristic identity \eqref{Dcharacteristic} shows that $D$ is a
positive multiple of $\lambda_0^2-s^2$. At the upstream state,
\[
\lambda_0^2(1;M_0)
=c_S^2(\tho)+\frac{\mu_0}{\rhoo\tau_0}
=c_S^2(\tho)(1+\bmuo),
\]
and $s=M_0c_S(\tho)$, which gives the formula for $D(1;M_0)$ and
\eqref{criticalMach}. The global ordering then proves all stated sign
properties of $D$.

Taylor expansion at $(F,M_0)=(1,\Mst)$ yields \eqref{Dnearcrit} and
$\dMach>0$. Since $\Pi=\Si/\Theta$, $\Si(1)=0$, and
$\Si'(1;\Mst)=\mu_0/\tau_0$, direct differentiation gives
\eqref{dFexplicit}. Finally $\Si''(1;\Mst)>0$ by Hugoniot convexity and
$p_\theta(1,\theta_0)\ge0$, so $\dDef>0$.
\end{proof}

\begin{proposition}[Boillat--Ruggeri loss-of-regularity threshold]
\label{prop:BRthreshold}
Let the hypotheses of the nonisothermal RET model and its convex entropy
hold.  If $M_0>\Mst$, then
\[
s=M_0c_S(\theta_0)>\lambda_0(1;M_0),
\]
where $\lambda_0(1;M_0)$ is the maximal characteristic speed at the
unperturbed equilibrium state.  Therefore a $C^1$ shock structure is
impossible.  Consequently, if an admissible piecewise-smooth connection between the two
equilibrium end states exists, it must contain at least one subshock.
\end{proposition}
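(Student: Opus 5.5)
The proof reduces to one algebraic inequality plus an appeal to the Boillat--Ruggeri theorem. First we compare $s$ with $\lambda_0(1;M_0)$. At the upstream state $\Si(1)=0$ and $\Theta(1)=\theta_0$, so \eqref{lambda0local} together with \eqref{adiabaticSound} and \eqref{bmu0def} gives
\[
\lambda_0^2(1;M_0)=c_S^2(\tho)+\frac{\mu_0}{\rhoo\tau_0}=c_S^2(\tho)\bigl(1+\bmuo\bigr)=(\Mst)^2c_S^2(\tho),
\]
exactly as in the proof of Lemma~\ref{lem:Dpositive}. Since $s=M_0c_S(\tho)$, we get $s^2-\lambda_0^2(1;M_0)=c_S^2(\tho)\bigl(M_0^2-(\Mst)^2\bigr)>0$ whenever $M_0>\Mst$, and both speeds are positive. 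Equivalently, by \eqref{Dcharacteristic}, $D(1;M_0)=1-(M_0^2-1)/\bmuo<0$. This gives the first displayed inequality.

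Second, we must justify that $\lambda_0(1;M_0)$ is the maximal characteristic speed at the unperturbed state. The one-dimensional system \eqref{system} has four characteristic speeds in Lagrangian form: $\pm\lambda_0$ and a double zero eigenvalue. The zero eigenvalues come from the entropy/energy and relaxation fields, whose Lagrangian transport terms vanish. I would check this by writing the principal part in the variables $(v,F,\theta,\bar\sigma)$ and observing that the $\bar\sigma$ and energy rows couple only through $v_X$. So the spectrum at equilibrium is $\{-\lambda_0,0,0,\lambda_0\}$, and $\lambda_0$ is maximal. This is the step I expect to need the most care, because the formula \eqref{lambda0local} is quoted from \cite{ArimaRuggeri2026} rather than derived here. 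I would rely on that reference and on the symmetric-hyperbolic structure, which guarantees real eigenvalues, and confirm only the count and signs.

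Third, we invoke the Boillat--Ruggeri theorem \cite{Breakdown}. The system has a convex mathematical entropy $-\eta$ with the main field $\mathbf u'$, and it is symmetric hyperbolic with a dissipative production by \eqref{entropylaw}. For such a system the theorem states that no $C^1$ traveling wave connecting equilibrium states exists with speed larger than the maximal characteristic speed at the unperturbed equilibrium state. The model satisfies these hypotheses by construction, so the theorem applies directly with the upstream state $(1,0,\tho,0)$, and $C^1$ structure is excluded.

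Finally, for the subshock statement, suppose an admissible piecewise-smooth connection between the two equilibrium end states exists. It cannot be $C^1$ throughout by the previous step. A piecewise-smooth traveling weak solution that fails to be $C^1$ must have a jump in the fields somewhere. One could worry about a continuous profile with a corner, but at a corner the ODE \eqref{ODE} forces continuity of $F_\varphi$ wherever $D\neq0$. So any genuine loss of regularity is a discontinuity satisfying the Rankine--Hugoniot conditions of the full system \eqref{system}, which is by definition a subshock. Hence at least one subshock is present.
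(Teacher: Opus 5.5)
Your proposal is correct and follows the same route as the paper. You evaluate $\lambda_0^2(1)=c_S^2(\tho)(1+\bmuo)=(\Mst)^2c_S^2(\tho)$ at the upstream state, note that $\lambda_0$ is the only positive characteristic speed, and apply the Boillat--Ruggeri theorem; your explicit check of the spectrum $\{-\lambda_0,0,0,\lambda_0\}$ makes rigorous what the paper asserts in one line.

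Your corner argument for the final claim goes beyond the paper, which passes directly from ``no $C^1$ structure'' to ``a discontinuity is present''. If you keep it, add that at a zero of $D$ where $\calN\neq0$ the slope $F_\varphi$ is unbounded, so no corner of a piecewise-$C^1$ profile can occur there either.
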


\begin{proof}
At the upstream equilibrium state, Lemma~\ref{lem:Dpositive} gives
\[
\lambda_0^2(1)=c_S^2(\theta_0)(1+\bmuo)
=c_S^2(\theta_0)(\Mst)^2.
\]
Thus $M_0>\Mst$ is equivalent to $s>\lambda_0(1)$.  Since the system is
symmetric hyperbolic and possesses a convex entropy~\cite{ArimaRuggeri2026},
the theorem of Boillat and Ruggeri~\cite{Breakdown} applies and excludes a
continuously differentiable shock structure.  Therefore any admissible piecewise-smooth continuation, if it exists,
must contain a discontinuity.
\end{proof}

\begin{theorem}[Existence of the monotone traveling-wave profile]
		\label{thm:existence}
		Consider a selected right-going compressive Hugoniot branch satisfying
		Assumptions~\ref{ass:state}, \ref{ass:k}, and~\ref{ass:hugoniot-convexity},
		and let $\bmuo>0$. For a Mach number $1<M_0<\Mst$ on that branch,
		with $\Mst$ given by \eqref{criticalMach}, there exists a monotone
		$C^1$ solution $F(\varphi)$ of the ODE \eqref{ODE}, strictly increasing
		from the Rankine--Hugoniot downstream state $F_-$ to the upstream state
		$1$, defined on an interval $I=(\varphi_-,\varphi_+)$ with:
		\begin{itemize}
			\item $\varphi_- = -\infty, \quad \varphi_+ = +\infty$ \quad for $0 < m \le 1$,
			\item $\varphi_-$ and $\varphi_+$ finite \quad for $m > 1$.
		\end{itemize}
		For $m > 1$, the compact profile can be extended to all of $\mathbb{R}$ by constant extrapolation.
	\end{theorem}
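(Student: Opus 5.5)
The plan is to treat \eqref{ODE} as a scalar autonomous ODE $F_\varphi=G(F):=\calN(F)/(sD(F))$ on $[F_-,1]$ and to integrate it by separation of variables.

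\textbf{Sign of the right-hand side.} For $1<M_0<\Mst$, Lemma~\ref{lem:Dpositive} gives $D(F;M_0)>0$ on the closed interval $[F_-,1]$. Since $D$ is continuous there, it is bounded below by some $D_*>0$. By Theorem~\ref{thm:Sigma}(P2), $\Si<0$ on $(F_-,1)$. Hence
\[
\calN(F)=F\,2^{1/m-1}k(\Th)^{-1/m}|\Si|^{1/m}>0
\]
in the interior, using $F>0$ and (H3). Thus $G>0$ on $(F_-,1)$ and $G$ vanishes only at the endpoints. The function $G$ is continuous on $[F_-,1]$ and locally Lipschitz in the interior, because $\Si\neq0$ there and all coefficients are $C^1$. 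We fix $F_c\in(F_-,1)$ and define $\varphi(F)=\int_{F_c}^{F}dF'/G(F')$. This map is strictly increasing and $C^1$, and its inverse $F(\varphi)$ is a strictly increasing $C^1$ solution. Since $G>0$, monotonicity is automatic, and $F$ tends to $F_-$ and $1$ at the ends of its maximal interval $I=(\varphi_-,\varphi_+)$.

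\textbf{Endpoint behaviour.} Whether $\varphi_\pm$ is finite depends on the convergence of $\int dF/G$ near the endpoints. Near $F=1$, (P3) gives $|\Si(F)|=\Si'(1)(1-F)(1+O(1-F))$ with $\Si'(1)=\rhoo(s^2-c_S^2)>0$. Near $F_-$, (P4) gives $|\Si(F)|=|\Si'(F_-)|(F-F_-)(1+O(F-F_-))$ with $\Si'(F_-)<0$, using (P5) for the remainder. The factors $F$, $k(\Th)^{-1/m}$ and $sD$ are bounded above and below by positive constants, thanks to (H3), $F_->0$ and $D\ge D_*$. Consequently, near each endpoint $e$,
\[
G(F)\asymp |F-e|^{1/m}.
\]
The integral $\int |F-e|^{-1/m}dF$ diverges if and only if $1/m\ge1$, that is, $m\le1$. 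So both $\varphi_\pm$ are infinite for $0<m\le1$ and both are finite for $m>1$.

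\textbf{Extension for $m>1$.} Here the solution reaches $F_-$ and $1$ at finite $\varphi_\mp$ with $F_\varphi=G(F)\to0$. Extending $F$ by the constants $F_-$ and $1$ therefore gives a $C^1$ function on $\mathbb R$ that still satisfies \eqref{ODE}, since $\calN$ vanishes at the constant states. The non-uniqueness this produces is exactly the non-Lipschitz phenomenon of $|\cdot|^{1/m}$. For $m\le1$ the right-hand side is Lipschitz at the endpoints, so they are reached only asymptotically, which is consistent with the previous step.

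\textbf{Main obstacle.} The main obstacle is the uniform positivity of $D$ on the whole closed interval, including the upstream endpoint. This is what allows the denominator to be absorbed into constants. It is supplied by the global ordering of Lemma~\ref{lem:Dpositive} together with $D(1;M_0)=1-(M_0^2-1)/\bmuo>0$. The remaining work is the careful two-sided comparison of $|\Si|$ with the linear distance to each endpoint, which uses (P3)--(P5).
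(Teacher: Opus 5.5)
Your proposal is correct and follows the paper's proof essentially step by step. You get $D>0$ on $[F_-,1]$ from Lemma~\ref{lem:Dpositive} and $\calN>0$ from (P2), and you construct the profile through the quadrature $\varphi(F)=s\int_{F_c}^{F}D/\calN\,dr$. The endpoint comparison $|\Si|\asymp|F-F_e|$ from (P3)--(P5) gives divergence of the integral exactly for $0<m\le1$, and for $m>1$ you use constant extrapolation with $F_\varphi\to0$, all as in the paper.

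One harmless slip: (H3) only makes $k$ continuous, so $G$ need not be locally Lipschitz in the interior. This costs nothing, because the quadrature uses only continuity and positivity of $1/G$.
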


	\begin{proof}
		By Lemma~\ref{lem:Dpositive}, the denominator $D(F;M_0)$ is
		strictly positive on $[F_-,1]$ for $1<M_0<\Mst$. By
		Theorem~\ref{thm:Sigma}\,\textup{(P2)}, $\Si(F)<0$ on $(F_-,1)$, and
		therefore
		\[
		\calN(F)=-F\,2^{1/m-1}k(\Th(F))^{-1/m}
		|\Si(F)|^{(1-m)/m}\Si(F)>0.
		\]
		Fix $F_c\in(F_-,1)$ and define the quadrature
		\begin{equation}\label{profileQuadrature}
		\varphi(F):=s\int_{F_c}^{F}\frac{D(r)}{\calN(r)}\,dr.
		\end{equation}
		It is strictly increasing and hence invertible. Its inverse is a $C^1$
		solution of \eqref{ODE}, with $F_\varphi>0$ in the open interval.

		The endpoint behavior follows from the nonzero endpoint slopes. Indeed,
		Theorem~\ref{thm:Sigma}\,\textup{(P3)}--\textup{(P5)} gives
		$|\Si(F)|\asymp|F-F_e|$ near either equilibrium endpoint
		$F_e\in\{F_-,1\}$. Since $k(\Th)$ is bounded above and below and
		$D$ is positive and bounded away from zero on the compact subcritical
		branch,
		\[
		\frac{sD(F)}{\calN(F)}\asymp |F-F_e|^{-1/m}.
		\]
		Thus the integral \eqref{profileQuadrature} diverges at an endpoint if
		and only if $0<m\le1$. In this case its range is all of $\mathbb R$.
		For $m>1$ both endpoint integrals are finite; moreover
		$F_\varphi\asymp|F-F_e|^{1/m}\to0$, so constant extrapolation beyond
		the two finite endpoints produces a global $C^1$ profile.
	\end{proof}

	\subsection{Definition of the dimensionless shock thickness}
The traveling-wave coordinate $\varphi=X-st$ has the dimensions of
length. We introduce the natural RET relaxation length
\begin{equation}\label{naturalLength}
\ell_0:=\tau_0c_0,
\qquad c_0:=c_S(\theta_0),
\end{equation}
together with the dimensionless traveling-wave coordinate
\begin{equation}\label{xidef}
\xi:=\frac{\varphi}{\ell_0}.
\end{equation}
From the mass integral~\eqref{mass}, $v=s(1-F)$, so the velocity varies
monotonically from $0$ upstream to $v_-=s(1-F_-)$ downstream. Introducing
\[
w(\xi)=\frac{v(\xi)}{v_-}
=\frac{1-F(\xi)}{1-F_-}\in[0,1],
\]
we define directly the dimensionless maximum-slope shock thickness by
\begin{equation}\label{DeltaDef}
\Delta:=\frac{1-F_-}{\max_{\xi}|F_\xi(\xi)|}
=\frac{1-F_-}{\ell_0\max_{\varphi}|F_\varphi(\varphi)|}.
\end{equation}
Equivalently, if
\[
\Delta_{\mathrm{phys}}
:=\frac{1-F_-}{\max_{\varphi}|F_\varphi(\varphi)|}
\]
denotes the dimensional (physical) shock thickness, then
\begin{equation}\label{DeltaPhysicalRelation}
\Delta=\frac{\Delta_{\mathrm{phys}}}{\ell_0}.
\end{equation}
Thus $\Delta$ measures the physical shock thickness in units of the
natural relaxation length $\ell_0$. Since $\ell_0$
is fixed for a prescribed upstream state, dimensional and dimensionless
thicknesses have the same critical exponents.

\begin{lemma}[Weak-shock Hugoniot parametrization]
			\label{lem:weakHugoniot}
			Set
			\begin{equation*}
				\delta_F=1-F_-.
			\end{equation*}
			Along the right-going compressive Hugoniot branch, the weak-shock limit
			is equivalently described by
			\begin{equation*}
				\delta_F\to0^+,
				\qquad
				M_0\to1^+,
			\end{equation*}
			and one has
			\begin{equation}
			\label{weakHugoniotParam}
				\delta_F=A(M_0-1)+o(M_0-1),
				\qquad A>0.
			\end{equation}
			More precisely, if
			\begin{equation*}
				\mathcal P(\delta_F):=p(1-\delta_F,\theta_-(\delta_F)),
			\end{equation*}
			where $\theta_-(\delta_F)$ is determined by the thermodynamic Hugoniot
			relation~\eqref{RHthermal}, then
			\begin{equation*}
				A=\frac{4\rho_0 c_S^2(\theta_0)}{\mathcal P''(0)}.
			\end{equation*}
		\end{lemma}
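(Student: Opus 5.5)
The plan is to work entirely on the equilibrium Hugoniot locus, parametrized by $\delta_F$, and to expand the secant formula \eqref{secantspeed} to second order. First, apply the implicit-function theorem to the thermodynamic Hugoniot relation \eqref{RHthermal}, written as $G(\delta_F,\theta):=\eps(1-\delta_F,\theta)-\eps_0-\frac{\delta_F}{2\rhoo}\bigl(p(1-\delta_F,\theta)+p_0\bigr)=0$. Its $\theta$-derivative at $(0,\tho)$ is $C_v(1,\tho)>0$ by (H1). This gives a unique $C^2$ branch $\theta_-(\delta_F)$ with $\theta_-(0)=\tho$, consistent with (H2), and makes $\mathcal P$ a $C^2$ function near $0$.

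Next I will differentiate $G=0$ once at $\delta_F=0$. Using $\eps_F=(\theta p_\theta-p)/\rhoo$ from \eqref{freeenergyidentities}, the terms in $p_0$ cancel and one gets $\theta_-'(0)=\tho p_\theta(1,\tho)/(\rhoo C_v(1,\tho))$. Then
\[
\mathcal P'(0)=-p_F(1,\tho)+p_\theta(1,\tho)\,\theta_-'(0)=\rhoo c_S^2(\tho)
\]
by \eqref{adiabaticSound}; this is the Hugoniot analogue of \eqref{PsprimeUpstream}. The Rankine--Hugoniot relation \eqref{shockSpeedFromFminus} reads $\rhoo s^2=(\mathcal P(\delta_F)-\mathcal P(0))/\delta_F$, so Taylor's theorem gives
\[
\rhoo s^2=\rhoo c_S^2(\tho)+\tfrac12\mathcal P''(0)\,\delta_F+o(\delta_F).
\]
Dividing by $\rhoo c_S^2(\tho)$ yields $M_0^2-1=\mathcal P''(0)\delta_F/(2\rhoo c_S^2(\tho))+o(\delta_F)$. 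With $M_0^2-1=2(M_0-1)+O((M_0-1)^2)$, inverting this relation gives \eqref{weakHugoniotParam} with $A=4\rhoo c_S^2(\tho)/\mathcal P''(0)$, provided $\mathcal P''(0)>0$.

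For the equivalence of the two limits, the expansion already shows $\delta_F\to0^+\Rightarrow M_0\to1^+$. For the converse I will argue as in Corollary~\ref{cor:HugoniotContinuation}. If $M_0\to1$ along a sequence with $\delta_F$ bounded away from zero, then compactness together with strict convexity of $\Si(\cdot;s)$ (H4b) would produce a limiting nontrivial zero of $\Si(\cdot;c_S(\tho))$. That is impossible, because $\Si'(1)=0$ at $s=c_S(\tho)$ by (P3), and strict convexity then forces $\Si>0$ on $[F_-,1)$. The map $\delta_F\mapsto M_0$ is $C^1$ with positive derivative at $0$ (once $\mathcal P''(0)>0$), so it is locally invertible.

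The main obstacle is the sign $\mathcal P''(0)>0$. (H4b) concerns $P_s(F)=p(F,\Theta(F;s))$ along the energy integral at fixed $s$, not the Hugoniot locus $\mathcal P$. I would first differentiate $G=0$ a second time and show that $\theta_-''(0)$ coincides with the isentropic value. This is the classical second-order contact of the Hugoniot curve with the isentrope: the entropy jump is $O(\delta_F^3)$. Then $\mathcal P''(0)=\partial_F^2p|_{\eta^{\rm eq}}(1,\tho)$, which is positive by genuine nonlinearity with classical orientation (H4a). As a cross-check, $P_s$ and $\mathcal P$ agree to first order at $F=1$ in the limit $s\to c_S(\tho)$. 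I would compare $P_s''(1)$ at $s=c_S(\tho)$ with $\mathcal P''(0)$ via the energy integral \eqref{energyint}, so that (H4b) also yields the sign.
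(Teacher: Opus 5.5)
Your proposal is correct and follows the paper's proof: the implicit-function branch $\theta_-(\delta_F)$ with $\theta_-'(0)=\theta_0p_\theta/(\rho_0C_v)$, the identity $\mathcal P'(0)=\rho_0c_S^2(\theta_0)$, the second-order Taylor expansion of the secant $\rho_0s^2=(\mathcal P(\delta_F)-p_0)/\delta_F$, and the inversion via $M_0^2-1\approx2(M_0-1)$. The paper simply asserts $\mathcal P''(0)>0$ from (H4a), so your second-order Hugoniot--isentrope contact argument usefully justifies that step; two caveats, though: (H4b) taken in the limit $s\to c_S(\theta_0)$ yields only $\mathcal P''(0)\ge0$, and your global converse needs both the strict sign at that limit and the compactness condition of Corollary~\ref{cor:HugoniotContinuation}, which is not a hypothesis of this lemma, so the (H4a) route is the one to rely on.
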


		\begin{proof}
			By Assumption~\ref{ass:state}, \eqref{RHthermal} defines a local
			$C^2$ downstream temperature branch
			$\theta_- = \theta_-(\delta_F)$, with $\theta_-(0)=\theta_0$. Differentiating
			\eqref{RHthermal} at $\delta_F=0$ gives
			\begin{equation*}
				\theta_-'(0)
				=\frac{\theta_0p_\theta(1,\theta_0)}{\rho_0C_v(1,\theta_0)}.
			\end{equation*}
			Hence
			\begin{align*}
				\mathcal P'(0)
				&=-p_F(1,\theta_0)
				+p_\theta(1,\theta_0)\theta_-'(0)\\
				&=\rho_0c_S^2(\theta_0),
			\end{align*}
			by the definition of the adiabatic sound speed~\eqref{adiabaticSound}.
			The momentum Rankine--Hugoniot relation gives
			\begin{equation*}
				s^2(\delta_F)=\frac{\mathcal P(\delta_F)-p_0}{\rho_0\delta_F}.
			\end{equation*}
			Therefore Taylor expansion at $\delta_F=0$ yields
			\begin{equation*}
				s^2(\delta_F)
				=c_S^2(\theta_0)+\frac{\mathcal P''(0)}{2\rho_0}\delta_F+o(\delta_F).
			\end{equation*}
			By genuine nonlinearity with the classical orientation in
			\textup{(H4a)}, $\mathcal P''(0)>0$ on the selected compressive branch. Using \eqref{MachDef}, we obtain
			\begin{equation*}
				M_0-1
				=\frac{\mathcal P''(0)}{4\rho_0c_S^2(\theta_0)}\delta_F+o(\delta_F),
			\end{equation*}
			which can be inverted and gives \eqref{weakHugoniotParam}.
		\end{proof}

	\subsection{Weak-shock regime}
	
	\begin{theorem}[Universal threshold $m=2$ and constitutive-dependent amplitude]\label{thm:weak}
Let Assumptions~\ref{ass:state}, \ref{ass:k}, and~\ref{ass:hugoniot-convexity} hold.
For $M_0>1$ close to $1$, the shock thickness satisfies
\begin{equation}\label{weakscaling}
\Delta \sim C_{\mathrm w}(m,p,\eps,k,\tau_0)
(M_0-1)^{1-2/m}
\quad\text{as }M_0\to1^+,
\end{equation}
where, with $A$ defined in \eqref{weakHugoniotParam},
\begin{equation}\label{CweakExplicit}
C_{\mathrm w}
=\frac{2^{1+2/m}}{\tau_0}
\left(\frac{k(\theta_0)}{\mathcal P''(0)}\right)^{1/m}
A^{1-2/m}>0.
\end{equation}
Hence the exponent $1-2/m$ is universal, while the amplitude is not.  In
particular:
\begin{itemize}
\item $m<2$: $\Delta\to+\infty$;
\item $m=2$: $\Delta\to\Delta_*>0$, with
\begin{equation}\label{weakPlateauExplicit}
\Delta_*=\frac{4}{\tau_0}
\left(\frac{k(\theta_0)}{\mathcal P''(0)}\right)^{1/2};
\end{equation}
\item $m>2$: $\Delta\to0$.
\end{itemize}
\end{theorem}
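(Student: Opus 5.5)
The plan is to compute $\max_\varphi|F_\varphi|$ directly from the scalar ODE \eqref{ODE} in the weak-shock limit. Then insert it into the definition \eqref{DeltaDef} and use Lemma~\ref{lem:weakHugoniot} to convert powers of $\delta_F=1-F_-$ into powers of $M_0-1$. I will use the profile of Theorem~\ref{thm:existence}, which exists since $M_0\to1^+<\Mst$. Along it, \eqref{ODE} gives
\[
|F_\varphi|=\frac{F\,2^{1/m-1}k(\Th(F))^{-1/m}|\Si(F)|^{1/m}}{s\,D(F)}.
\]
All factors except $|\Si|$ are asymptotically constant on the shrinking interval $[F_-,1]$. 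Indeed, $F=1+O(\delta_F)$, and $\Th(F)\in[\tho,\theta_-]$ with $\theta_-\to\tho$, so $k(\Th)\to k(\tho)$ by continuity (H3). Also $s=M_0c_0\to c_0$. Finally $D(F;M_0)\to1$ uniformly, because $D(1;M_0)=1-(M_0^2-1)/\bmuo\to1$ by Lemma~\ref{lem:Dpositive} and $D$ is $C^1$ jointly in $(F,s)$ by (H2). Hence
\[
\max|F_\varphi|=\frac{2^{1/m-1}k(\tho)^{-1/m}}{c_0}\bigl(\max_{[F_-,1]}|\Si|\bigr)^{1/m}(1+o(1)).
\]

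The core step is the asymptotics of $\max|\Si|$. Since $\Si(\cdot;s)$ is $C^2$ jointly in $(F,s)$ and $\Si''>0$ by (H4b), I write $\Si''(F;s)=\sigma_2+o(1)$ uniformly on $[F_-,1]$ as $M_0\to1^+$, where $\sigma_2:=\Si''(1;c_0)$. Combined with the two zeros $\Si(F_-)=\Si(1)=0$ of (P1), this gives
\[
\Si(F)=-\tfrac{\sigma_2}{2}(1-F)(F-F_-)\bigl(1+o(1)\bigr),
\]
which can be justified, for example, by Peano remainder interpolation between the two zeros. Hence $\max|\Si|=\sigma_2\delta_F^2/8\,(1+o(1))$.

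To identify $\sigma_2$ with $\mathcal P''(0)$, expand $\Si(F_-)=0$ about $F=1$ using (P3): $0=-\Si'(1)\delta_F+\tfrac12\sigma_2\delta_F^2(1+o(1))$. This yields $\rhoo(s^2-c_0^2)=\tfrac12\sigma_2\delta_F(1+o(1))$. The proof of Lemma~\ref{lem:weakHugoniot} gives instead $\rhoo(s^2-c_0^2)=\tfrac12\mathcal P''(0)\delta_F+o(\delta_F)$. Comparing the two, $\sigma_2=\mathcal P''(0)>0$. This avoids having to relate the energy-integral curve $\Th(F;s)$ and the Hugoniot curve $\theta_-(\delta_F)$ pointwise.

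Putting the pieces together, with $\ell_0=\tau_0c_0$,
\[
\Delta=\frac{\delta_F\,c_0\,2^{1-1/m}k(\tho)^{1/m}8^{1/m}}{\tau_0c_0\,\mathcal P''(0)^{1/m}\delta_F^{2/m}}(1+o(1))
=\frac{2^{1+2/m}}{\tau_0}\Bigl(\frac{k(\tho)}{\mathcal P''(0)}\Bigr)^{1/m}\delta_F^{1-2/m}(1+o(1)).
\]
Substituting $\delta_F=A(M_0-1)(1+o(1))$ from \eqref{weakHugoniotParam} yields \eqref{weakscaling} with the constant \eqref{CweakExplicit}. The three cases follow from the sign of $1-2/m$. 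For $m=2$ the factor $A^0=1$ drops out, giving \eqref{weakPlateauExplicit}.

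The main obstacle I expect is the uniformity in the second step. The $o(1)$ in $\Si''$ must be uniform over a family of functions and intervals that both vary with $s$. This rests on the joint $C^2$ regularity of $\Th(F;s)$ in (H2) on a neighborhood of the limiting point $(1,c_0)$, together with compactness. A similar uniformity argument is needed for $D\to1$ and for $k(\Th)\to k(\tho)$; both follow from continuity on that same compact neighborhood.
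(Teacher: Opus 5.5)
Your proposal is correct and follows essentially the same route as the paper's proof. You read $\max|F_\varphi|$ off \eqref{ODE}, with every factor except $|\Si|^{1/m}$ uniformly asymptotically constant; you show that $\Si$ is uniformly a parabola of curvature $\Si''(1;c_0)=\mathcal P''(0)$ between its two zeros; and you convert $\delta_F$ to $M_0-1$ via Lemma~\ref{lem:weakHugoniot}. The differences from the paper are only technical: you identify the curvature by matching the Taylor expansion of $\Si(F_-)=0$ (using (P3)) against the expansion of $s^2-c_0^2$ from that lemma, whereas the paper Taylor-expands $\mathcal P(\delta_F)=P_{s(\delta_F)}(1-\delta_F)$ directly, and you use the Lagrange interpolation remainder where the paper uses uniform convergence of the rescaled function $\delta_F^{-2}\Si(1-\delta_F z)$.
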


\begin{proof}
Set $\delta_M=M_0-1$ and retain the deformation jump
$\delta_F=1-F_-$ of Lemma~\ref{lem:weakHugoniot}.  Then
\begin{equation}\label{weakjump}
\delta_F=A\delta_M+o(\delta_M),\qquad A>0.
\end{equation}
For $z=(1-F)/\delta_F\in[0,1]$, introduce
\[
\widehat\Sigma_{\delta_F}(z):=\delta_F^{-2}\Si(1-\delta_F z;s(\delta_F)).
\]
Both endpoint values vanish.  Moreover,
\[
\widehat\Sigma_{\delta_F}''(z)=\Si''(1-\delta_F z;s(\delta_F)).
\]
The weak-Hugoniot expansion gives
\[
\Si'(1;s(\delta_F))
=\rho_0\bigl(s^2(\delta_F)-c_S^2(\theta_0)\bigr)
=\frac{\mathcal P''(0)}{2}\,\delta_F+o(\delta_F).
\]
The second-order curvature is identified most directly from the reduced
pressure. Since $P_s(1)=p_0$ and
$P_s'(1)=-\rho_0c_S^2(\theta_0)$ for every $s$, Taylor expansion, uniform
for $s$ near $c_S(\theta_0)$, gives
\[
\mathcal P(\delta_F)
=P_{s(\delta_F)}(1-\delta_F)
=p_0+\rho_0c_S^2(\theta_0)\delta_F
+\frac12P_{s(\delta_F)}''(1)\delta_F^2+o(\delta_F^2).
\]
Because $s(\delta_F)\to c_S(\theta_0)$ and $P_s''$ is continuous in
$(F,s)$ by \textup{(H2)},
\[
\mathcal P''(0)=P_{c_S(\theta_0)}''(1)
=\Si''(1;c_S(\theta_0)),
\]
the last equality following because the Rankine--Hugoniot chord is affine.
Since the constitutive branch is $C^2$,
$\widehat\Sigma_{\delta_F}''$ converges uniformly to
$\mathcal P''(0)$ on $[0,1]$.  Together with the two zero boundary values,
this yields the uniform limit
\begin{equation}\label{SigmaWeakUniform}
\delta_F^{-2}\Si(1-\delta_F z;s(\delta_F))
\longrightarrow
-\frac{\mathcal P''(0)}{2}z(1-z)
\quad\text{uniformly on }[0,1].
\end{equation}
This is stronger than a two-sided order estimate and is the point that
justifies the use of the asymptotic symbol $\sim$.

In the same limit, $F\to1$, $s\to c_0$, $\Theta\to\theta_0$,
$k(\Theta)\to k(\theta_0)$, and $D\to1$ uniformly.  From
\eqref{ODE}, \eqref{numerator}, and \eqref{SigmaWeakUniform},
\begin{equation}\label{FprimeWeakUniform}
\delta_F^{-2/m}F_\varphi(1-\delta_F z)
\longrightarrow
G_{\mathrm w}(z):=
\frac{\mathcal P''(0)^{1/m}}
{2c_0k(\theta_0)^{1/m}}
[z(1-z)]^{1/m}
\end{equation}
uniformly on $[0,1]$.  The limiting function has a unique maximum at
$z=1/2$; therefore the maxima converge:
\[
\delta_F^{-2/m}\max|F_\varphi|
\longrightarrow
\frac{\mathcal P''(0)^{1/m}}
{2^{1+2/m}c_0k(\theta_0)^{1/m}}.
\]
Substitution into \eqref{DeltaDef}, using $\ell_0=\tau_0c_0$ and
\eqref{weakjump}, gives \eqref{weakscaling}--\eqref{CweakExplicit}.
The three regimes and the plateau \eqref{weakPlateauExplicit} follow
immediately.
\end{proof}

\begin{remark}
		The key step is the uniform limit \eqref{SigmaWeakUniform}: the linear behavior of $\Si(F)$
		near $F=1$, with slope proportional to $(M_0^2-1)$, is an exact
		consequence of thermodynamic compatibility and the finite upstream sound
		speed, and holds for every admissible $p(F,\theta)$ in the present class.
		The thermal coupling through $\Th(F)$ enters
		only at order $O((1-F)^2)$ and does not affect the leading-order exponent.
		In particular, the exponent $1-2/m$ is the same as in the isothermal
		case~\cite{RuggeriTaniguchi}: the nonisothermal generalization does not change the
		universality class.
	\end{remark}
	
	\subsection{Near-critical regime}
\begin{proposition}[Localization of the upstream supercritical singularity]
\label{prop:subshock}
Under Assumptions~\ref{ass:state}--\ref{ass:hugoniot-convexity}, suppose
the selected equilibrium Hugoniot branch satisfies the compactness condition
of Corollary~\ref{cor:HugoniotContinuation}. Let
$F_-^*:=F_-(\Mst)\in(0,1)$ denote its unique $C^2$ continuation. There exist
$\eta>0$ and a
neighborhood of $\Mst$ such that, for $M_0>\Mst$ sufficiently close to
$\Mst$, there is a unique value
\[
F^*(M_0)\in(1-\eta,1)\cap(F_-(M_0),1)
\]
satisfying
\[
D(F^*;M_0)=0.
\]
Moreover, $\calN(F^*)>0$. Hence the zero of the denominator is not
cancelled by the numerator, and the continuous traveling-wave branch
cannot pass through $F^*$.  For slightly supercritical shocks, this local pole is the reduced-ODE
manifestation of the loss of $C^1$ regularity established globally by
Proposition~\ref{prop:BRthreshold}.
\end{proposition}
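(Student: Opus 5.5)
The plan is to apply the implicit-function theorem to $D$ near $(F,M_0)=(1,\Mst)$, using the local expansion \eqref{Dnearcrit} from Lemma~\ref{lem:Dpositive}. Set $u=1-F$ and $\delta=\Mst-M_0$, so that supercritical Mach numbers correspond to $\delta<0$.

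\textbf{Step 1: regularity of $D$.} I first check that $D$ is $C^1$ jointly in $(F,M_0)$ near $(1,\Mst)$. By (H2), $\Theta(F;s)$ is jointly $C^2$, hence $\Si$ and $\Pi=\Si/\Theta$ are jointly $C^2$, and $D=1-(\tau_0\theta_0/\mu_0)\Pi'$ is jointly $C^1$. The compactness hypothesis together with Corollary~\ref{cor:HugoniotContinuation} guarantees that the branch, and with it the domain on which $\Theta$ is defined, extends to a neighborhood of $M_0=\Mst$. This is needed so that $D$ makes sense for $M_0$ slightly above $\Mst$.

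\textbf{Step 2: the local zero.} Lemma~\ref{lem:Dpositive} gives $D(1;\Mst)=0$ and $\partial_FD(1;\Mst)=-\dDef\neq0$. The implicit-function theorem then yields $\eta>0$, a neighborhood $U$ of $\Mst$, and a unique $C^1$ function $F^*(M_0)$ with $|1-F^*|<\eta$ and $D(F^*;M_0)=0$, where uniqueness holds within the box $(1-\eta,1+\eta)\times U$.

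\textbf{Step 3: location and sign.} To place $F^*$ on the correct side of $1$, I linearize via \eqref{Dnearcrit}, which gives $u^*=1-F^*=-(\dMach/\dDef)\,\delta+o(\delta)$. For $\delta<0$, that is $M_0>\Mst$, this gives $u^*>0$ and $u^*\to0$. An equivalent route uses $D(1;M_0)=1-(M_0^2-1)/\bmuo<0$ for $M_0>\Mst$; since $\partial_FD<0$ near $1$, $D$ increases as $F$ decreases, so its zero lies in $(1-\eta,1)$. Because $F_-(M_0)\to F_-^*<1$, shrinking $\eta$ so that $1-\eta>F_-^*$, and $U$ accordingly, gives $F^*\in(F_-(M_0),1)$. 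Uniqueness within $(1-\eta,1)$ is inherited from the implicit-function box; alternatively, it follows from strict monotonicity of $D(\cdot;M_0)$ in $F$, since $\partial_FD\approx-\dDef<0$ on a small box.

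\textbf{Step 4: non-cancellation.} For $M_0$ near $\Mst$, Theorem~\ref{thm:Sigma}(P2) still holds, because (H4b) is assumed on every branch. Hence $\Si(F^*)<0$, and \eqref{numerator} together with (H3) gives $\calN(F^*)>0$. The ODE \eqref{ODE} therefore has $F_\varphi\to\pm\infty$ as $F\to F^*$, so a continuous monotone solution connecting $F_-$ to $1$ cannot pass through $F^*$. This matches Proposition~\ref{prop:BRthreshold}.

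The main obstacle is Step 1: justifying that the branch and $\Theta(F;s)$ are defined and $C^1$ slightly beyond $\Mst$, and that the implicit-function box can be taken uniform. Corollary~\ref{cor:HugoniotContinuation} together with the joint $C^2$ hypothesis in (H2) is what I would rely on here.
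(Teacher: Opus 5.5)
Your proposal is correct and follows essentially the same route as the paper. Both apply the implicit-function theorem to $D$ at $(F,M_0)=(1,\Mst)$, with transversality coming from $\dDef>0$ in Lemma~\ref{lem:Dpositive}, and both locate the zero through the linearization $1-F^*=(\dMach/\dDef)(M_0-\Mst)+o(M_0-\Mst)$. Both then use continuity of $F_-$ to place $F^*$ inside $(F_-(M_0),1)$ and Theorem~\ref{thm:Sigma}\textup{(P2)} to get $\calN(F^*)>0$. Your explicit check that $D$ is jointly $C^1$ (via (H2) and Corollary~\ref{cor:HugoniotContinuation}) and your alternative sign argument via $D(1;M_0)<0$ spell out details the paper leaves implicit.
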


\begin{proof}
Let $u=1-F$ and $\delta_+=M_0-\Mst>0$. The expansion
\eqref{Dnearcrit}, written on the supercritical side, gives
\[
D(1-u;M_0)
=-\dMach\,\delta_+ +\dDef\,u+o(\delta_++u),
\qquad \dMach,\dDef>0.
\]
The leading equation has the unique zero
\[
u^*=\frac{\dMach}{\dDef}\,\delta_++o(\delta_+).
\]
The implicit-function theorem therefore gives a unique zero of $D$ in a
fixed neighborhood of $F=1$. No global uniqueness claim for possible
additional zeros deeper in a supercritical profile is needed here. Since $F_-(\Mst)<1$ and the Hugoniot branch is continuous, one has
$F^*(M_0)=1-u^*\in(F_-(M_0),1)$ for $\delta_+$ sufficiently small.
Theorem~\ref{thm:Sigma}(P2) then gives $\Si(F^*)<0$, and the explicit
formula~\eqref{numerator} implies $\calN(F^*)>0$. Thus the pole is
uncancelled and represents a genuine loss of the continuous profile.
\end{proof}

\begin{theorem}[Universal threshold $m=1$ and constitutive-dependent amplitudes]
\label{t:nearcrit}
Let Assumptions~\ref{ass:state}, \ref{ass:k}, and
\ref{ass:hugoniot-convexity} hold, and suppose that the selected Hugoniot
branch satisfies the compactness condition of
Corollary~\ref{cor:HugoniotContinuation}. Let
$F_-^*:=F_-(\Mst)\in(0,1)$ denote its uniquely continued downstream state.
For $M_0<\Mst$, the behavior of
$\Delta$ as $M_0\nearrow\Mst$ is as follows:
\begin{itemize}
\item For $0<m\le1$,
\[
\lim_{M_0\nearrow\Mst}\Delta
=\Delta_c(m,\bmuo,p,\eps,k)>0,
\]
and the limiting value is generally not universal.
\item For $m>1$,
\[
\Delta\sim C_c(m,\bmuo,p,\eps,k)
(\Mst-M_0)^{(m-1)/m},
\qquad C_c>0.
\]
\end{itemize}
Thus the threshold $m=1$ and the exponent $(m-1)/m$ are independent
of the constitutive functions, whereas finite limiting values and
amplitudes depend on the equation of state and on the temperature profile.
\end{theorem}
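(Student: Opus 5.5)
The plan is to work directly with the profile slope from \eqref{ODE}, $F_\varphi=\calN(F)/(sD(F;M_0))$, and to locate where $|F_\varphi|$ is maximized as $M_0\nearrow\Mst$. By Corollary~\ref{cor:HugoniotContinuation} the downstream state converges to $F_-^*\in(0,1)$. Hence the deformation jump $1-F_-\to1-F_-^*>0$, so the jump in \eqref{DeltaDef} is harmless. Everything reduces to the behavior of $\max_\varphi|F_\varphi|$.

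Away from $F=1$ nothing degenerates. By Lemma~\ref{lem:Dpositive}, $D(F;\Mst)>0$ on $[F_-^*,1)$, and by joint continuity in $(F,M_0)$ from (H2), $D$ stays bounded below on $[F_-(M_0),1-\eta]$ uniformly near $\Mst$. On the same set, $\calN$ is bounded and converges uniformly, so the only possible blow-up of $F_\varphi$ is near $F=1$. I would use the local expansion \eqref{Dnearcrit}, $D(1-u;M_0)=\dMach\delta+\dDef u+o(\delta+u)$, together with Theorem~\ref{thm:Sigma}(P3) at $M_0=\Mst$, which gives $\Si(1-u)=-\Si'(1;\Mst)u+O(u^2+\delta u)$ with $\Si'(1;\Mst)=\mu_0/\tau_0>0$ (as used in the proof of Lemma~\ref{lem:Dpositive}). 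Then, with $\kappa:=2^{1/m-1}k(\theta_0)^{-1/m}(\mu_0/\tau_0)^{1/m}$, one has near $u=0$
\[
|F_\varphi|=\frac{\kappa\,u^{1/m}\,(1+o(1))}{s\,(\dMach\delta+\dDef u)(1+o(1))}.
\]

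Case $m>1$. The function $g(u)=u^{1/m}/(\dMach\delta+\dDef u)$ has interior maximum at $u_*=\dMach\delta/((m-1)\dDef)$. Its value there is $g(u_*)=c_m\,\delta^{1/m-1}$, where $c_m$ is explicit in $m,\dMach,\dDef$. I would rescale $u=\delta y$ and show that $\delta^{1-1/m}|F_\varphi(1-\delta y)|$ converges to $\kappa y^{1/m}/(s_*(\dMach+\dDef y))$ locally uniformly in $y$, where $s_*=\Mst c_0$. Near $y=0$ and for large $y$ the limit is smaller than its maximum, and the outer region $u\ge\eta$ contributes only $O(1)\ll\delta^{1/m-1}$. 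Together these give convergence of the maxima, so $\max|F_\varphi|\sim C\delta^{-(m-1)/m}$. This yields the claimed law with $C_c=(1-F_-^*)\,/\,(\ell_0 C)$, which depends on $k(\theta_0)$, $\dDef$ (through $\Si''$ and $p_\theta/C_v$), $\bmuo$, and $F_-^*$. The exponent itself is constitutive-independent.

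Case $m\le1$. At $\delta=0$ the limit slope is $|F_\varphi|\approx\kappa u^{1/m-1}/(s_*\dDef)$. This tends to a finite positive value when $m=1$ and to $0$ when $m<1$. I would show that $F_\varphi(\cdot;M_0)\to F_\varphi(\cdot;\Mst)$ uniformly on $[F_-(M_0),1]$. The estimate $u^{1/m}/(\dMach\delta+\dDef u)\le u^{1/m-1}/\dDef$ gives a $\delta$-uniform bound and equicontinuity near $u=0$, and elsewhere plain continuity applies. The maxima then converge to the finite positive maximum of the critical slope function, which is positive because $\calN>0$ in the interior. This gives $\Delta_c>0$.

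The main obstacle is justifying the uniformity of the $o(\delta+u)$ remainders in the scaled variable and bounding the intermediate region $\delta\ll u\le\eta$, particularly when $m=1$, where the maximum of the critical slope may sit exactly at $u=0$. For $m=1$ I would handle the endpoint carefully using the supremum over the closed interval, which the uniform convergence still controls.
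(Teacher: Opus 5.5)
Your route is the paper's: a uniform lower bound for $D$ away from $F=1$ by compactness, the local forms $|\calN|\approx\kappa u^{1/m}$ and $D\approx\dMach\delta+\dDef u$, the rescaling $u=\delta y$ with maximizer $y_*=\dMach/((m-1)\dDef)$, and a $\delta$-uniform endpoint bound for $m\le1$. The $m>1$ case goes through. The intermediate region you flag as the main obstacle is already handled by your own display. Because $\dMach\delta+\dDef u\ge\min(\dMach,\dDef)(\delta+u)$, the $o(\delta+u)$ remainder in $D$ and the $O(u+\delta)$ remainder in $\Si$ are relative errors bounded by some $\epsilon_\eta\to0$ uniformly for $u,\delta\le\eta$. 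Hence $\delta^{1-1/m}|F_\varphi(1-\delta y)|=(\kappa/s)\,y^{1/m}(\dMach+\dDef y)^{-1}(1+O(\epsilon_\eta))$ for all $0<y\le\eta/\delta$, not only on bounded $y$-sets. This is the paper's two-sided estimate $\calN\asymp u^{1/m}$, $D\asymp\delta+u$, which gives $\delta^{1-1/m}|F_\varphi|\le CL^{-(m-1)/m}$ on $L\delta\le u\le\eta$.

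The step that fails is your treatment of $m=1$. You assert uniform convergence $F_\varphi(\cdot;M_0)\to F_\varphi(\cdot;\Mst)$ and equicontinuity near $u=0$, and you let uniform convergence control the endpoint supremum. For $m=1$ neither holds. For each $M_0<\Mst$ one has $F_\varphi(1-u;M_0)\to0$ as $u\to0$, because $D(1;M_0)>0$, whereas $F_\varphi(1-u;\Mst)\to\kappa/(s_*\dDef)\neq0$. The passage between the two occurs in a boundary layer $u\sim\delta$: at $u=\delta$ the subcritical slope is close to $\kappa/(s_*(\dMach+\dDef))$ and the critical one is close to $\kappa/(s_*\dDef)$. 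So the sup-norm distance does not tend to zero.

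The conclusion survives, but by convergence of suprema rather than of functions, which is exactly how the paper argues. Your bound $u^{1/m}/(\dMach\delta+\dDef u)\le u^{1/m-1}/\dDef$ at $m=1$ gives $\limsup_{\delta\to0}\sup_{0<u\le\eta}|F_\varphi|\le(1+O(\epsilon_\eta))\,\kappa/(s_*\dDef)$. That is, asymptotically the slope near the endpoint is no larger than the endpoint value of the critical slope. Pointwise convergence at each fixed interior point, or evaluation at $u=\sqrt{\delta}$, gives the matching lower bound. Combined with uniform convergence on $u\ge\eta$, this shows that $\max|F_\varphi|$ converges to the larger of the interior supremum of $F_\varphi(\cdot;\Mst)$ and $\kappa/(s_*\dDef)$. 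That limit is finite and positive, so $\Delta\to\Delta_c>0$. For $0<m<1$ your uniform-convergence argument is correct as written, since both slopes are $O(u^{1/m-1})\to0$ at the endpoint.
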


\begin{proof}
Let $\delta=\Mst-M_0>0$, $u=1-F$, and
$s_*=\Mst c_S(\theta_0)$.  Lemma~\ref{lem:Dpositive} gives, uniformly
for $(u,\delta)$ in a sufficiently small neighborhood of the origin,
\[
D(1-u;M_0)=\dMach\delta+\dDef u+o(\delta+u),
\qquad \dMach,\dDef>0.
\]
At criticality,
$|\Si(1-u)|=(\mu_0/\tau_0)u[1+o(1)]$ and
$k(\Theta(1-u))=k(\theta_0)[1+o(1)]$.  Hence, for $u=\delta y$,
\begin{equation}\label{NearCriticalScaledLimit}
\delta^{1-1/m}F_\varphi(1-\delta y;M_0)
\longrightarrow
G_*(y):=
\frac{B_*}{s_*}
\frac{y^{1/m}}{\dMach+\dDef y},
\end{equation}
uniformly for $y$ in every bounded interval, where
\begin{equation}\label{BstarDef}
B_*:=2^{1/m-1}
\left(\frac{\mu_0}{\tau_0k(\theta_0)}\right)^{1/m}>0.
\end{equation}

Suppose first that $m>1$.  The function $G_*$ has a unique maximum at
\[
y_* = \frac{\dMach}{\dDef(m-1)}.
\]
To see that this local maximum is also the global one, split the profile
into three regions.  On $u\ge\varepsilon$, the strict critical inequality in
Lemma~\ref{lem:Dpositive}, compactness, and continuity give a uniform
positive lower bound for $D$, so $F_\varphi$ is uniformly
bounded and becomes negligible after multiplication by
$\delta^{1-1/m}$.  In the intermediate region
$L\delta\le u\le\varepsilon$, the two-sided local estimates
$\calN\asymp u^{1/m}$ and $D\asymp\delta+u$ give
\[
\delta^{1-1/m}|F_\varphi|
\le C L^{-(m-1)/m},
\]
which is arbitrarily small for large $L$.  On $0\le u\le L\delta$,
\eqref{NearCriticalScaledLimit} is uniform.  It follows that
\[
\delta^{1-1/m}\max|F_\varphi|
\longrightarrow \max_{y\ge0}G_*(y)>0.
\]
The maximum can be evaluated explicitly:
\begin{equation}\label{GstarMaximum}
\max_{y\ge0}G_*(y)
=\frac{B_*}{s_*}
\frac{(m-1)^{(m-1)/m}}
{m\,\dMach^{(m-1)/m}\dDef^{1/m}}.
\end{equation}
Since $1-F_-(M_0)\to1-F_-^*>0$,
\begin{equation}\label{CcExplicit}
C_c=
\frac{(1-F_-^*)s_*}{\ell_0B_*}
\frac{m\,\dMach^{(m-1)/m}\dDef^{1/m}}
{(m-1)^{(m-1)/m}}>0,
\end{equation}
and the asserted asymptotic equivalence follows.  Thus the symbol
$\sim$ is justified by convergence of the rescaled global maximum, not
merely by upper and lower bounds.

For $0<m<1$, the critical local slope behaves as
$u^{1/m-1}\to0$ at the upstream endpoint. Moreover, for sufficiently small
$u$ and $\delta$ the local estimate
\[
|F_\varphi(1-u;M_0)|
\le C\frac{u^{1/m}}{\delta+u}
\le C u^{1/m-1}
\]
is uniform. Hence the contribution of a small upstream neighborhood to the
maximum is uniformly small, while away from that neighborhood the ODE
right-hand side converges uniformly. The maximum slope therefore converges
to a finite, strictly positive maximum of the critical profile.

For $m=1$, the convergence is not uniform exactly at $u=0$, but the local
boundary-layer quotient is bounded and satisfies
\[
F_\varphi(1-u;M_0)
=\frac{B_*}{s_*}\frac{u}{\dMach\delta+\dDef u}\,[1+o(1)].
\]
The local supremum is bounded above by $B_*/(s_*\dDef)+o(1)$, while
choosing, for example, $u=\sqrt{\delta}$ gives the matching lower limit.
Thus the local supremum tends to $B_*/(s_*\dDef)$, which is precisely the
finite one-sided critical slope.
Combining the convergence of the local endpoint supremum with uniform
convergence on compact subsets away from $F=1$, the global maximum converges
to the larger of the critical interior maximum and the one-sided endpoint
value $B_*/(s_*\dDef)$. In particular, it converges to a finite nonzero
value. Consequently $\Delta\to\Delta_c>0$ for all $0<m\le1$.
\end{proof}

\section{Physical constitutive functions for non-Newtonian fluids}
\label{sec:constitutive}

We now discuss the most physically relevant constitutive choices
for non-Newtonian fluids, verify the thermodynamic and rheological
hypotheses (H1)--(H3), verify Hugoniot convexity, and illustrate the global characteristic-ordering result.

\subsection{Equation of state for nearly incompressible fluids}

Most non-Newtonian fluids of practical interest (polymer gels,
biological soft tissues, dense suspensions of cornstarch)
are nearly incompressible.
The standard model is the
	\textbf{Tait--Murnaghan equation of state}~\cite{Tait1888,Murnaghan1944,DymondMalhotra1988}:
\begin{equation}\label{Tait}
p(\rho,\theta)
= p_\infty\!\left[\left(\frac{\rho}{\rhoo}\right)^n - 1\right]
+ a_T(\theta-\tho),
\end{equation}
where $p_\infty>0$ is a reference pressure, $n\ge1$ is the Tait exponent
(typically $n\approx7$ for water-like fluids), and $a_T\ge0$ is a constant
isochoric thermal-pressure coefficient. At a chosen reference state one may
write $a_T=\alpha_0K_{T0}$ in terms of the reference volumetric expansion
coefficient and isothermal bulk modulus; away from that state the actual
expansion coefficient is $\alpha=a_T/K_T$. In terms of $F=\rhoo/\rho$,
\begin{equation*}
p(F,\theta)=p_\infty\bigl[F^{-n}-1\bigr]+a_T(\theta-\tho).
\end{equation*}
The isothermal sound speed is
\begin{equation*}
c_T^2 = \frac{n\,p_\infty}{\rhoo}\,F^{1-n},
\end{equation*}
which is finite and positive for $F\in(0,\infty)$, satisfying the sound-speed part of (H1).
For $n=1$ and a small reference expansion ratio
$a_T/K_{T0}$, this form is consistent with the EQTI specialization
discussed by Gouin and Ruggeri~\cite{GouinRuggeri2012}.

\subsection{Internal energy}

From the Gibbs relation and Maxwell's identity, the internal energy
compatible with \eqref{Tait} is:
\begin{align}\label{eps_Tait}
\eps(F,\theta)-\eps(1,\tho)
&= C_v\,(\theta-\tho)
+ \frac{a_T\tho+p_\infty}{\rhoo}(F-1)\nonumber\\
&\quad + \frac{p_\infty}{\rhoo(n-1)}\left(F^{1-n}-1\right),
\end{align}
where $C_v>0$ satisfies the thermodynamic stability condition (H1) and $n \neq 1$.
For $n=1$ the last term is replaced by $-(p_\infty/\rhoo)\log F$,
which is also the continuous limit as $n\to1$.

\subsection{Temperature-dependent consistency coefficient}

Three physically motivated choices for $k(\theta)$, all satisfying
Assumption~\ref{ass:k}, are:

\begin{enumerate}
\item \textbf{Temperature-independent consistency coefficient}:
\begin{equation*}
k(\theta) \equiv k_0 > 0.
\end{equation*}
This choice removes the explicit thermal dependence of the rheological
coefficient, but the model remains nonisothermal because the energy balance
and the temperature profile are retained. Thus a constant consistency
coefficient must not be confused with the isothermal model studied in the
companion paper, in which the temperature is fixed and the energy balance is
omitted.
\item \textbf{Power-law thermal softening} (polymer melts):
\begin{equation*}
k(\theta) = k_0\left(\frac{\theta}{\tho}\right)^{-n_T},
\qquad n_T > 0.
\end{equation*}
Viscosity decreases with temperature (thermal thinning), typical
of polymer solutions and biological gels.
\item \textbf{Arrhenius model} (dense suspensions, food gels):
\begin{equation}\label{kArrhenius}
k(\theta) = k_0\exp\!\left(\frac{E_a}{R\,\theta}\right),
\end{equation}
where $E_a>0$ is the activation energy.
This is a common thermally activated model used to describe strong temperature sensitivity in complex fluids and soft materials; see, for example, the rheological contexts discussed in~\cite{BrownJaeger2014,Coussot2005}.
\end{enumerate}

All three choices satisfy $0<k_{\min}\le k(\theta)\le k_{\max}<\infty$
on any compact temperature interval $[\tho,\theta_-]$, confirming
Assumption~\ref{ass:k}.

\subsection{Verification of the temperature profile and Hugoniot convexity}

For the notation above, since the reference pressure is
$p_0=p(1,\theta_0)=0$, the energy integral can be solved explicitly:
\begin{equation}\label{ThetaTaitExplicit}
\Theta(F;s)=\theta_0+\frac{1}{C_v}
\left\{
\frac{s^2}{2}(1-F)^2
+\frac{a_T\theta_0+p_\infty}{\rho_0}(1-F)
-\frac{p_\infty}{\rho_0(n-1)}\bigl(F^{1-n}-1\bigr)
\right\},
\end{equation}
where, for $n=1$, the quotient $(F^{1-n}-1)/(n-1)$ is interpreted as
$-\log F$. This formula proves uniqueness and $C^2$ regularity, but the range
condition in \textup{(H2)} also requires monotonicity.  Put
$u=1-F\in[0,\delta_F]$, $\delta_F=1-F_-$, and define
\[
H(u):=C_v\frac{d\Theta}{du}
=s^2u+\frac{a_T\theta_0}{\rho_0}
-\frac{p_\infty}{\rho_0}
\left[(1-u)^{-n}-1\right].
\]
At the endpoints,
\[
H(0)=\frac{a_T\theta_0}{\rho_0}\ge0,
\qquad
H(\delta_F)=\frac{a_T\theta_-}{\rho_0}\ge0,
\]
where the second identity follows from the momentum
Rankine--Hugoniot relation.  Moreover,
\[
H''(u)=-\frac{n(n+1)p_\infty}{\rho_0}(1-u)^{-n-2}<0.
\]
Thus $H$ is strictly concave and lies above the chord joining its
non-negative endpoint values.  Hence $d\Theta/du\ge0$ on $[0,\delta_F]$
(strictly in the interior), and therefore
\[
\theta_0\le\Theta(F;s)\le\theta_-,
\qquad F\in[F_-,1].
\]
This supplies the range statement missing from mere explicit solvability.

The Hugoniot curvature can also be evaluated explicitly.  From
\eqref{ThetaTaitExplicit},
\begin{equation}\label{PsTaitSecond}
P_s''(F)=
np_\infty F^{-n-2}
\left[(n+1)-\frac{a_TF}{\rho_0C_v}\right]
+\frac{a_Ts^2}{C_v}.
\end{equation}
Consequently the simple sufficient condition
\begin{equation}\label{TaitConvexCondition}
\frac{a_T}{\rho_0C_v}\le n+1
\end{equation}
implies $P_s''(F)>0$ throughout $[F_-,1]$ and verifies
\textup{(H4b)}.

\begin{corollary}[Tait--Murnaghan fluid]\label{cor:Tait}
Let
\[
p(F,\theta)=p_\infty(F^{-n}-1)+a_T(\theta-\theta_0),
\qquad p_\infty,C_v>0,\quad n\ge1,\quad a_T\ge0.
\]
Then \textup{(H1)}--\textup{(H2)} hold on every positive-temperature
compressive branch.  More precisely:
\begin{enumerate}
\item the adiabatic sound speed is
\begin{equation}\label{TaitSoundSpeedFull}
c_S^2(F,\theta)=
\frac{np_\infty}{\rho_0}F^{1-n}
+\frac{\theta a_T^2}{\rho_0^2C_v}F^2>0;
\end{equation}
\item the exact temperature profile is given by
\eqref{ThetaTaitExplicit} and is monotone between $\theta_0$ and
$\theta_-$;
\item if \eqref{TaitConvexCondition} holds, then \textup{(H4b)} is
satisfied;
\item if \eqref{TaitConvexCondition} holds, then the global characteristic
ordering \eqref{globalCharacteristicOrdering} applies, so the upstream
state is the strict minimum of $\lambda_0$ and no earlier interior zero of
$D$ can occur; moreover,
\[
\dDef=\frac{\tau_0}{\mu_0}\Si''(1;\Mst)
+\frac{2a_T}{\rho_0C_v}>0.
\]
Thus both the no-earlier-encounter property and transversality are automatic.
\item on every bounded Mach-number interval, each positive-temperature
compressive branch satisfies the compactness condition of
Corollary~\ref{cor:HugoniotContinuation}. Consequently, whenever such a branch is defined for
$M_0\in(M_1,\Mst)$, Corollary~\ref{cor:HugoniotContinuation} applies and
its downstream state extends uniquely and smoothly through the critical
Mach number.
\end{enumerate}
\end{corollary}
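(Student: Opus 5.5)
The plan is to verify the five items in order, mostly by direct computation from the explicit Tait--Murnaghan formulas, and then to invoke the general results already proved: Lemma~\ref{lem:Dpositive} and Corollary~\ref{cor:HugoniotContinuation}.

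\textbf{Item 1.} Using $p_F=-np_\infty F^{-n-1}$, $p_\theta=a_T$ and $\eps_\theta=C_v$, substitution into \eqref{adiabaticSound} gives
\[
c_S^2=\frac{F^2}{\rho_0}\Bigl(np_\infty F^{-n-1}+\frac{\theta a_T^2}{\rho_0C_v}\Bigr),
\]
which is \eqref{TaitSoundSpeedFull}. This expression is finite and positive for $F>0$ and $\theta>0$.

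\textbf{Item (H1).} Thermodynamic stability holds because $p_F<0$, $\eps_\theta=C_v>0$ and $p_\theta=a_T\ge0$. Consistency with the Gibbs relation is checked by differentiating \eqref{eps_Tait} and comparing with \eqref{freeenergyidentities}:
\[
\eps_F=\frac{a_T\theta_0+p_\infty-p_\infty F^{-n}}{\rho_0}=\frac{\theta p_\theta-p}{\rho_0}.
\]
The $\theta$ in the last expression comes from $a_T\theta_0+a_T(\theta-\theta_0)$, which holds once the $C_v(\theta-\theta_0)$ term is accounted for in the energy.

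\textbf{Item 2 and (H2).} The energy integral \eqref{energyint} is linear in $\theta$ with coefficient $C_v>0$, so it has the unique solution \eqref{ThetaTaitExplicit}. This solution is smooth, in fact analytic, in $(F,s)$ for $F>0$. Monotonicity and the range condition follow from the concavity argument for $H(u)$ given just above. The one step to check carefully is $H(\delta_F)=a_T\theta_-/\rho_0$. It follows by writing $s^2\delta_F=(p(F_-,\theta_-)-p_0)/\rho_0$ via \eqref{RHmom}: the $F^{-n}$ terms then cancel, and $a_T(\theta_--\theta_0)+a_T\theta_0=a_T\theta_-$. Local uniqueness of $\theta_-$ follows because $\eps_\theta-\partial_\theta[(1-F_-)(p+p_0)/(2\rho_0)]=C_v-(1-F_-)a_T/(2\rho_0)$. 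I must argue this is nonzero on the branch, for instance by using the linear-in-$\theta$ structure of \eqref{RHthermal}, which yields an explicit solution whenever that coefficient is nonzero. I expect this to be a minor technical point.

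\textbf{Item 3.} Differentiate $P_s(F)=p_\infty(F^{-n}-1)+a_T(\Theta-\theta_0)$ twice using \eqref{ThetaTaitExplicit}. This gives
\[
\Theta''=\frac{1}{C_v}\Bigl(s^2-\frac{np_\infty}{\rho_0}F^{-n-1}\Bigr),
\]
which yields \eqref{PsTaitSecond}. Since $F\le1$, the bracket is at least $(n+1)-a_T/(\rho_0C_v)\ge0$ under \eqref{TaitConvexCondition}. The term $a_Ts^2/C_v\ge0$, and the first term is strictly positive unless equality holds. In the equality case I need strict positivity from $F<1$ on the interior, or from the $a_T s^2$ term if $a_T>0$; the equality case with $a_T=0$ cannot occur since then the bound reads $0\le n+1$. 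Hence $P_s''>0$.

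\textbf{Item 4.} With (H1)--(H4b) established and (H4a) taken as the branch selection, this item is a direct application of Lemma~\ref{lem:Dpositive}. Substituting $p_\theta(1,\theta_0)=a_T$ and $C_v$ constant into \eqref{dFexplicit} gives the stated formula for $\dDef$, which is positive by item 3.

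\textbf{Item 5 (main obstacle).} This is the compactness condition, which I expect to be the hardest step. On a bounded Mach interval, $s^2$ is bounded. From \eqref{RHmom},
\[
p_\infty(F_-^{-n}-1)+a_T(\theta_--\theta_0)=\rho_0s^2(1-F_-)\le\rho_0s^2 .
\]
Both terms on the left are nonnegative, using $\theta_-\ge\theta_0$, so $F_-^{-n}\le1+\rho_0s^2/p_\infty$. This bounds $F_-$ away from zero. Then $\theta_-=\Theta(F_-;s)$ is bounded by \eqref{ThetaTaitExplicit}, since every term is bounded once $F_-$ is bounded below. Corollary~\ref{cor:HugoniotContinuation} then applies and gives the smooth continuation through $\Mst$.
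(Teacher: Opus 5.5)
Your items 1--4 follow the paper's proof essentially line by line. The shared steps are Gibbs compatibility through $\eps_F=(\theta p_\theta-p)/\rho_0$, the sound speed from \eqref{adiabaticSound}, the explicit solution \eqref{ThetaTaitExplicit} with the concavity of $H$, formula \eqref{PsTaitSecond}, and $p_\theta=a_T\ge0$ in \eqref{dFexplicit}. Your handling of equality in \eqref{TaitConvexCondition} is right: equality forces $a_T=(n+1)\rho_0C_v>0$, so $a_Ts^2/C_v>0$. One small correction to your (H1) commentary: the $\theta$ cancels identically in $\theta p_\theta-p$ simply because $p$ is affine in $\theta$; the $C_v(\theta-\theta_0)$ term plays no role there.

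The genuine difference is item 5. The paper keeps $F_-$ away from zero through the energy integral. By \eqref{ThetaTaitExplicit}, $\Theta(F;s)\to-\infty$ as $F\downarrow0$ uniformly for bounded $s$, through the $-F^{1-n}$ or $\log F$ term, so positivity of $\theta_-=\Theta(F_-;s)$ alone suffices. You use the momentum relation \eqref{RHmom} together with $\theta_-\ge\theta_0$, which gives the explicit bound $F_-\ge(1+\rho_0s^2/p_\infty)^{-1/n}$. Both arguments are valid. Yours is quantitative and rests on the stiffening of the cold pressure, but it needs the ordering $\theta_-\ge\theta_0$, which your item 2 supplies. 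The paper's uses only $\theta_->0$ and the divergence of the cold energy.

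The local-uniqueness point for \eqref{RHthermal} that you leave open closes in one line; the paper's proof does not treat it explicitly either. Take $\delta_F=1-F_-\in(0,1)$, $p_0=0$, and $P_c(F)=p_\infty(F^{-n}-1)$. Then \eqref{RHthermal} reads
\[
\Bigl(C_v-\frac{\delta_F\,a_T}{2\rho_0}\Bigr)(\theta_--\theta_0)
=\frac{1}{\rho_0}\Bigl[a_T\theta_0\,\delta_F+\frac{\delta_F}{2}\bigl(P_c(F_-)+P_c(1)\bigr)-\int_{F_-}^{1}P_c(F)\,dF\Bigr].
\]
The right-hand side is strictly positive by the trapezoid inequality for the strictly convex $P_c$. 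Hence the coefficient on the left cannot vanish at any $F_-$ where a solution exists, so that solution is unique. On a branch with $\theta_->\theta_0$ the coefficient is in fact positive.
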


\begin{proof}
The Gibbs relation follows from the internal energy in
\eqref{eps_Tait}, for which $\eps_\theta=C_v>0$,
$\eps_F=(\theta p_\theta-p)/\rho_0$, and
$p_F=-np_\infty F^{-n-1}<0$.  Formula
\eqref{TaitSoundSpeedFull} follows directly from
\eqref{adiabaticSound}.  The explicit solution
\eqref{ThetaTaitExplicit} and the concavity argument above prove all
parts of \textup{(H2)}, including the global temperature range.
Equation \eqref{PsTaitSecond} proves \textup{(H4b)} under
\eqref{TaitConvexCondition}.  Finally $p_\theta=a_T\ge0$, so
\eqref{dFexplicit} yields $\dDef>0$.

It remains to verify the compactness required in
Corollary~\ref{cor:HugoniotContinuation}. Let $s$ range in any bounded
interval. From \eqref{ThetaTaitExplicit},
\[
\Theta(F;s)\longrightarrow-\infty\qquad\text{as }F\downarrow0,
\]
uniformly for bounded $s$: for $n>1$ the divergent term is proportional to
$-F^{1-n}$, while for $n=1$ it is proportional to $\log F$. Since a
positive-temperature compressive branch satisfies
$\theta_-=\Theta(F_-;s)\ge\theta_0>0$, its downstream deformation is
therefore uniformly bounded away from zero. Once $F_-$ is confined to a
compact subinterval of $(0,1)$, the explicit formula
\eqref{ThetaTaitExplicit} also gives a uniform upper bound for $\theta_-$.
Thus the downstream states remain in a compact subset of the admissible
thermodynamic domain, as claimed.
\end{proof}

\section{Numerical setup and normalization}
\label{sec:numerical}

To perform the numerical integration of the shock-wave profiles and ensure computational stability across different rheological regimes, the governing equations have been cast into a dimensionless form.

\subsection{Dimensionless variables and normalization}
The dimensionless traveling-wave coordinate $\xi$ and shock thickness
$\Delta$ have already been introduced in Eqs.~\eqref{xidef}--\eqref{DeltaPhysicalRelation}.
Stress is normalized by the upstream dynamic pressure $\rho_0c_0^2$.
The deformation gradient remains dimensionless and connects
$F(+\infty)=1$ to $F(-\infty)=F_-$. Accordingly, the symbol $\Delta$ used in the figures has throughout
the dimensionless meaning fixed in \eqref{DeltaDef}--\eqref{DeltaPhysicalRelation}.

\subsection{Material parameters and reproducibility}
The base computations in Figs.~\ref{fig:shock_profiles} and
\ref{fig:shock_thickness_3panels_final} use the completely specified
normalized set
\begin{equation}\label{NumericalParameters}
\rho_0=\theta_0=C_v=K_0=k_0=\tau_0=c_0=1,
\quad n=5,
\quad \alpha_T=0.02,
\quad a_T:=\alpha_T K_0=0.02,
\quad \gamma=0.5.
\end{equation}
Here $K_0$ and $\alpha_T$ are used only to parameterize the thermal-pressure
coefficient $a_T=\alpha_T K_0$; the constitutive equations depend on $a_T$.
The value
\[
p_\infty=\frac{1-a_T^2/C_v}{n}=0.19992
\]
is chosen so that \eqref{TaitSoundSpeedFull} gives exactly
$c_S(\theta_0)=c_0=1$.  The consistency law is
\[
k(\Theta)=k_0\exp[-\gamma(\Theta-\theta_0)].
\]
For a prescribed $\bmuo$, we set
$\mu_0=\bmuo\tau_0\rho_0c_0^2$; hence, with the present normalization,
$\mu_0=\bmuo$ and $\omega=1/\bmuo$.  Figure~\ref{fig:shock_profiles}
uses $M_0=1.4$ and $\bmuo=2.0$, whereas
Fig.~\ref{fig:shock_thickness_3panels_final} uses
$\bmuo=0.10$ and $0.20$ and the values of $m$ reported in its legends.
Condition \eqref{TaitConvexCondition} is strongly satisfied because
$a_T/(\rho_0C_v)=0.02<6$.

For each $M_0$, the exact temperature formula
\eqref{ThetaTaitExplicit} is substituted into $\Si(F;s)$.  The nontrivial
Rankine--Hugoniot root $F_-\in(0,1)$ is found by Brent's method with
relative and absolute tolerance $10^{-13}$.  The maximum of $|F_\xi|$
is evaluated on a composite grid of 3600 points in
$z=(1-F)/(1-F_-)$, clustered at both endpoints.  The Mach-number curves
use 260 points, logarithmically clustered near $M_0=1$ and $M_0=\Mst$.
A grid-refinement check was performed by increasing the composite grid from
3600 to 7200 points for all 3201 parameter--Mach-number cases used in
Figs.~\ref{fig:shock_profiles}--\ref{fig:thermal_effect}. The maximum
relative change in $\Delta$ was $1.57\times10^{-5}$ (about $0.0016\%$).
The profiles in Fig.~\ref{fig:shock_profiles} are integrated with an
adaptive Runge--Kutta solver using relative tolerance $2\times10^{-10}$
and absolute tolerance $2\times10^{-12}$.

As a numerical validation of Lemma~\ref{lem:Dpositive}, the denominator
$D$ was also sampled at 4000 deformation values on every plotted profile.
For the parameter sets used in the figures, the sampled values decrease
toward the upstream state, where the sampled minimum occurs, and remain
positive on every subcritical branch. This observed monotonic behavior of
$D$ is a property of the displayed parameter sets; it is not asserted by
Lemma~\ref{lem:Dpositive}, which proves the global minimum property of
$\lambda_0$ and the positivity of $D$. On meshes ending
$2\times10^{-5}$ below the theoretical critical Mach number, the smallest
sampled values are $4.2\times10^{-4}$ for $\bmuo=0.10$ and
$2.2\times10^{-4}$ for $\bmuo=0.20$; for
Fig.~\ref{fig:shock_profiles}, $D(1;1.4)=0.52$. These computations confirm
the analytic characteristic ordering and are not used as a substitute for
it.

Figure~\ref{fig:shock_profiles} displays representative profiles for several values of the flow index $m$ at fixed $M_0=1.4$ and $\bmuo=2.0$.  This value of $\bmuo$ gives $\Mst=\sqrt{3}$, so all the profiles are strictly subcritical.  Since $M_0$ and the equilibrium equation of state are fixed, all curves connect the same two Rankine--Hugoniot equilibrium states,
\[
        F(+\infty)=1,\qquad F(-\infty)=F_-(M_0),
\]
and the downstream value $F_-(M_0)$ is independent of $m$ and of the relaxation law.  The figure therefore uses horizontal dashed lines to mark the common end states.  The shock thicknesses reported in the legend quantify the effect of $m$; their global dependence on $M_0$ is shown in Fig.~\ref{fig:shock_thickness_3panels_final}.

Figure~\ref{fig:shock_thickness_3panels_final} shows the resulting dimensionless thickness $\Delta$ as a function of $M_0$.  The parameter set and numerical procedure are those in \eqref{NumericalParameters} and the preceding paragraph.  The exponential thermal-thinning law is the first-order local form of the Arrhenius relation \eqref{kArrhenius}: expanding $E_a/(R\theta)$ about $\theta_0$ and absorbing the constant factor into $k_0$ gives $\gamma=E_a/(R\theta_0^2)$.

The subdivision of the figure into three distinct panels highlights the thermomechanical competition between the weak-shock limit $M_0 \to 1^+$, where the temperature variation is small, and the progressively hotter state induced by viscous dissipation as the upper critical boundary $\Mst$ is approached.  The vertical dotted lines mark the two critical Mach numbers corresponding to $\bmuo=0.10$ and $\bmuo=0.20$.  The curves have been drawn up to the theoretical endpoint limits, so that in the middle and right panels the collapse $\Delta\to0$ at $\Mst$ is shown explicitly.

\begin{figure}[t]
\centering
\includegraphics[width=\columnwidth]{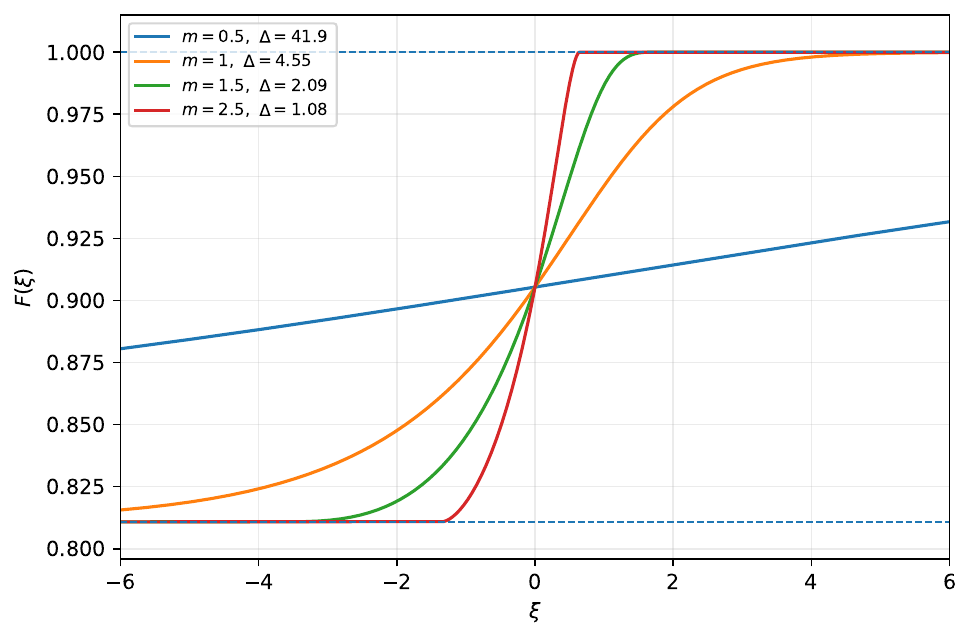}
\caption{Representative nonisothermal RET shock profiles for $M_0=1.4$, $\bmuo=2.0$, and the base parameter set \eqref{NumericalParameters}.  All curves connect the same Rankine--Hugoniot states.  The legend reports the dimensionless maximum-slope thickness $\Delta$.  For both $m=1.5$ and $m=2.5$, the horizontal portions show the $C^1$ constant continuation beyond the finite profile endpoints, as predicted for $m>1$.}
\label{fig:shock_profiles}
\end{figure}

The macroscopic behavior changes drastically depending on the power-law flow index $m$:
\begin{itemize}
\item \textbf{Left panel ($m \le 1$):} Shows that the shock thickness, although diverging in the weak-shock limit for these values of $m$, does not collapse near the critical Mach number. Instead it approaches a finite plateau $\Delta_c>0$, whose value depends on the constitutive parameters used in the numerical calculation.
\item \textbf{Middle panel ($1 < m \le 2$):} Highlights the change in analytical regime for $m>1$. In this range the thickness collapses as the critical boundary $\Mst$ is approached; thermal effects modify the prefactor and the shape of the curves, but not the exponent $(m-1)/m$.
\item \textbf{Right panel ($m > 2$):} Illustrates the regime where the macroscopic thickness vanishes at both ends of the interval of existence of the continuous shock ($M_0\to1^+$ and $M_0\to\Mst$). The zoomed ordinate axis is used only to display the resulting dome-like curves clearly.
\end{itemize}

Finally, Fig.~\ref{fig:thermal_effect} presents a dimensionless sensitivity study that isolates the role of the reference temperature. In this diagnostic only, we use
\[
\rho_0=C_v=K_0=k_0=\tau_0=1,
\quad n=5,
\quad \alpha_T=0.50,
\quad a_T:=\alpha_T K_0=0.50,
\quad \gamma=1.00,
\quad p_\infty=0.15,
\]
and keep the dimensional ratio $\mu_0/(\tau_0\rho_0)=0.30$ fixed while
$0.5\le\theta_0\le1.5$.  Hence
\[
c_S^2(\theta_0)=0.75+0.25\theta_0,
\qquad
\bmuo(\theta_0)=\frac{0.30}{c_S^2(\theta_0)}.
\]
The left panel therefore shows the corresponding downward shift of
$\Mst(\theta_0)$.  The right panel uses $M_0=1.10$ and $m=1.5$; this Mach
number remains subcritical over the whole displayed interval.  The stronger
thermal parameters are used only to make the diagnostic trend visible and
are not used in the analytical results or in Figs.~\ref{fig:shock_profiles}
and \ref{fig:shock_thickness_3panels_final}. They nevertheless remain inside
the analytically sufficient Hugoniot-convexity range
\eqref{TaitConvexCondition}, since
$a_T/(\rho_0C_v)=0.50<n+1=6$.

\begin{figure*}[tb]
\centering
\includegraphics[width=\textwidth]{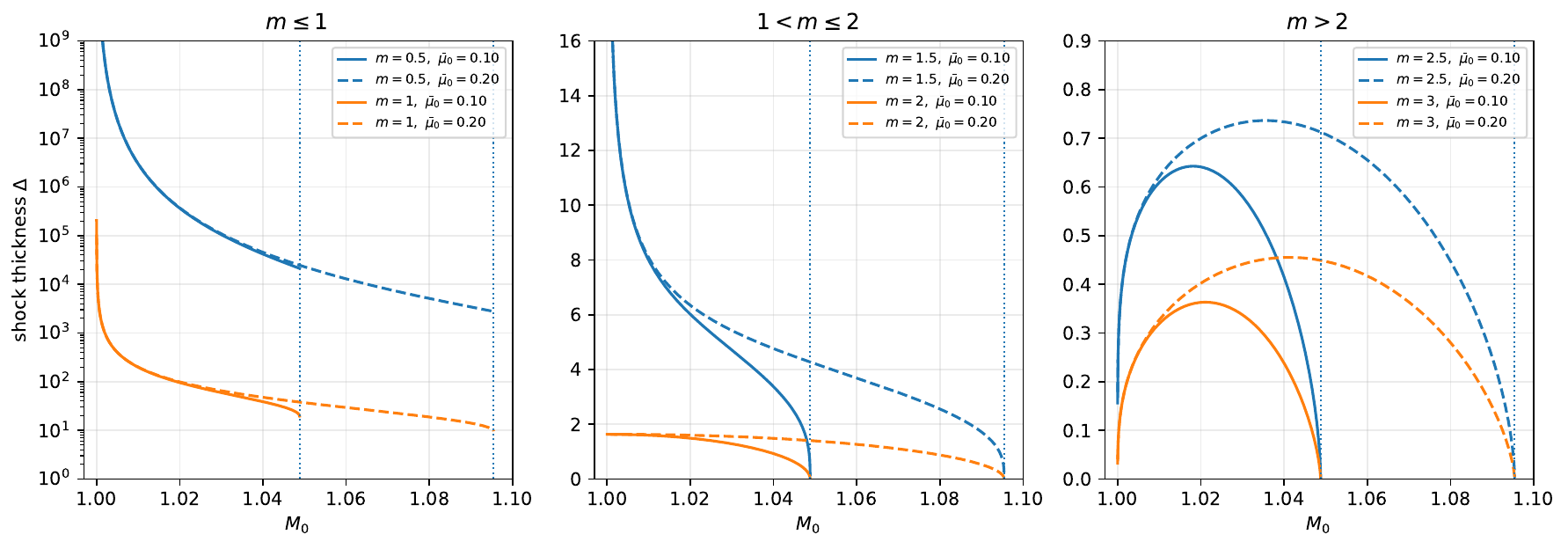}
\caption{Dimensionless thickness $\Delta$ versus $M_0$. Solid and dashed
curves correspond to $\bmuo=0.10$ and $0.20$, and vertical dotted lines mark
$\Mst$. The three panels show respectively the regimes $m\le1$,
$1<m\le2$, and $m>2$, including the predicted endpoint limits. The
thresholds are $m=2$ for weak shocks and $m=1$ near $\Mst$; plateau values
and prefactors are constitutive-dependent.}
\label{fig:shock_thickness_3panels_final}
\end{figure*}

\begin{figure*}[tb]
\centering
\includegraphics[width=0.84\textwidth]{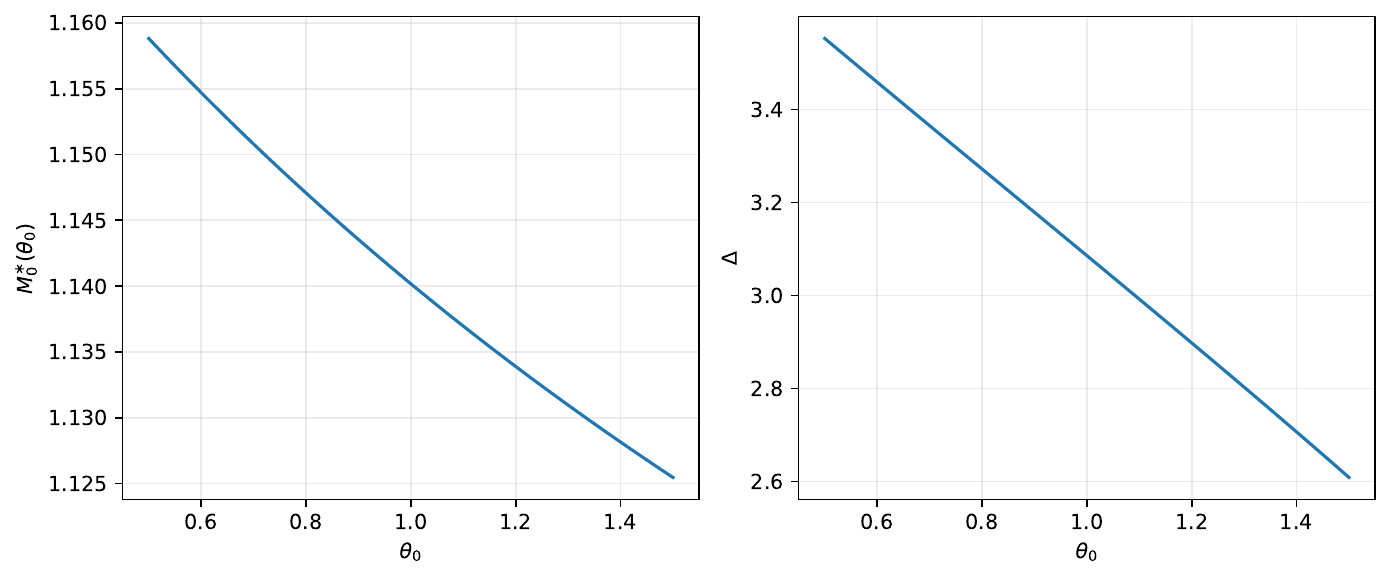}
\caption{Dimensionless sensitivity study of the temperature effect. Left: $\Mst$ when
$\mu_0/(\tau_0\rho_0)=0.30$ is fixed. Right: $\Delta$ at $M_0=1.10$ and
$m=1.5$ for $k(\Theta)=\exp[-(\Theta-\theta_0)]$; the displayed branch is
subcritical throughout.}
\label{fig:thermal_effect}
\end{figure*}

\section{Conclusions}
\label{sec:conclusions}

We have studied the shock-wave structure of a nonisothermal hyperbolic power-law fluid within Rational Extended Thermodynamics. Relative to the published isothermal companion study~\cite{RuggeriTaniguchi}, restoring the energy balance changes the mathematical profile problem, not merely its coefficients. The temperature is selected by an algebraic energy constraint and feeds back into the composite pressure, relaxation production, characteristic denominator, and critical Mach number through the upstream adiabatic sound speed. Corollary~\ref{cor:no-constant-temperature} makes this distinction explicit: under the present assumptions, no nontrivial compressive profile of the full system can remain at constant temperature.

The principal new analytical result is the global characteristic-ordering identity of Lemma~\ref{lem:Dpositive}. It proves that the upstream state is the strict global minimum of the positive nonequilibrium characteristic speed and thereby establishes the complete subcritical continuous branch without an additional regularity hypothesis. Once this genuinely nonisothermal obstruction is resolved, the two scaling mechanisms can be analyzed and are found to be robust: the flow index $m$ determines the constitutive-independent thresholds $m=2$ in the weak-shock limit and $m=1$ in the near-subshock limit, while the finite limits and amplitudes carry the thermodynamic and rheological dependence.

The isothermal theory analyzed with Taniguchi is not an ad hoc truncation. By fixing the thermal component of the main field and omitting the associated energy balance, it is obtained as a principal subsystem of the present nonisothermal equations~\cite{ArimaRuggeri2026,BoillatRuggeri1997}. Hence the entropy structure, symmetric hyperbolicity, and subcharacteristic ordering are retained. This exact relation at the level of the field equations should not, however, be confused with a profile-convergence theorem. Proving that a suitable family of nonisothermal shock profiles converges to an isothermal profile would require the identification of an appropriate thermal limiting parameter and uniform estimates for the traveling-wave problem. Such convergence remains a separate open problem; the result established here is instead the robustness of the two critical thresholds and their exponents under thermomechanical coupling.

The weak-shock exponent is $1-2/m$. Hence the thickness diverges for $m<2$, tends to a finite non-zero limit for $m=2$, and collapses for $m>2$. The threshold and exponent are independent of the particular equation of state, but the finite plateau at $m=2$ and the prefactor of the scaling law depend on the constitutive functions. This dependence enters through the local curvature of the composite pressure and through the temperature profile determined by the energy integral.

The Boillat--Ruggeri theorem~\cite{Breakdown} rules out a $C^1$ shock structure once $M_0>\Mst$; consequently, any admissible piecewise-smooth connection must contain a subshock. The converse subcritical statement is established here for a broad constitutive class: if $p_\theta\ge0$ and the reduced Hugoniot pressure is strictly convex, the positive nonequilibrium characteristic speed has its strict global minimum at the upstream state. Therefore no interior encounter can precede $M_0=\Mst$, and the terminal encounter is automatically transverse under the stated strict-convexity assumptions. In the near-subshock limit, the local degeneracy of the traveling-wave denominator produces the universal maximization structure
$u^{1/m}/(\dMach\delta+\dDef u)$, with $u=1-F$ and $\delta=\Mst-M_0$. The coefficients $\dMach$ and $\dDef$ are defined in \eqref{dMdFdef}, with the explicit constitutive representation of $\dDef$ given in \eqref{dFexplicit}. This yields the threshold $m=1$ and, for $m>1$, the collapse law $\Delta\sim C_c(\Mst-M_0)^{(m-1)/m}$. The exponent is universal, but the coefficient $C_c$ is not. For $0<m\le1$ the thickness $\Delta$ tends to a finite positive value; this value is generally dependent on the equation of state, the internal energy, the temperature profile, and the bounded function $k(\theta)$.

The numerical calculations with a Tait--Murnaghan-type equation of state and a temperature-dependent consistency coefficient illustrate these conclusions for the dimensionless thickness $\Delta$ defined in \eqref{DeltaDef}--\eqref{DeltaPhysicalRelation}. They also make clear the role of the reference temperature $\theta_0$. Through the upstream adiabatic sound speed $c_S(\theta_0)$, the reference temperature changes the dimensionless viscosity parameter $\bmuo$ and hence $\Mst(\theta_0)$ according to~\eqref{bmu0def} and \eqref{criticalMach}. In the Tait--Murnaghan example, when the dimensional viscous--relaxation scale is kept fixed, $c_S(\theta_0)$ increases with $\theta_0$; therefore $\bmuo$ decreases and the critical Mach number moves to lower values. The temperature profile $\Theta(F)$ also affects the shock thickness through the temperature-dependent consistency coefficient $k(\Theta)$. For the thermally thinning law used in the diagnostic calculation, higher temperatures reduce the effective consistency and the computed shock layer becomes thinner. These thermal effects change the numerical thickness, the finite plateaus, and the location of the subshock threshold, but, within the present class of hyperbolic power-law relaxation closures, they do not change the two flow-index thresholds or their critical exponents.

The collapse of the resolved smooth thickness for large $m$ should be interpreted with care. In the present macroscopic RET closure it signals a loss of resolved smooth shock-layer structure in the corresponding asymptotic limit. It should not be identified directly with microstructural jamming or discontinuous shear thickening in dense suspensions, which require additional variables such as particle volume fraction, frictional contacts, dilation, particle migration, and normal-stress effects. The present theory provides a thermodynamically consistent macroscopic classification of shock-thickness regimes; more detailed microstructural models would be needed for a quantitative description of dense-suspension impact.

A broader extension suggested by the analysis is that the two critical
mechanisms should depend primarily on the order with which the production
term vanishes near equilibrium, rather than on its exact power-law form.
Establishing this statement for general entropy-compatible production terms
would require separate hypotheses on the entropy structure,
subcharacteristic condition, and reduced-profile monotonicity, and is left
for future work.

\section*{Acknowledgments}
The work of T.R. was carried out within the activities of GNFM--INdAM.

\subsection*{Funding}
This research received no specific grant from any funding agency in the public, commercial, or not-for-profit sectors.

\subsection*{Conflict of Interest}
The author declares no conflicts of interest.

\subsection*{Data Availability}
The numerical data underlying the figures and the computational code used to generate them are available from the author upon reasonable request.

\end{document}